\newtheorem{thm}{Theorem}[section]
\theoremstyle{definition}
\theoremstyle{remark}
\numberwithin{equation}{section}
\newenvironment{nouppercase}{%
  \renewcommand{\uppercasenonmath}[1]{}}{}
\begin{document}

\title[]{The Extended Power Distribution: A new distribution on (0, 1)}%
\author{C. ~E. Ogbonnaya, S. ~P. Preston, A. ~T. ~A. Wood}%
\address{School of Mathematical Sciences, University of Nottingham}%
\thanks{pmxceog@nottingham.ac.uk}%

\keywords{Beta distribution; Kumaraswamy distribution; bounded support; proportions; power function distribution}%

\begin{abstract}
We propose a two-parameter bounded probability distribution called the extended power distribution. This distribution on $(0, 1)$ is similar to the beta  distribution, however there are some advantages which we explore. We define the moments and quantiles of this distribution and show that it is possible to give an $r$-parameter extension of this distribution ($r>2$). We also consider its complementary distribution and show that it has some flexibility advantages over the Kumaraswamy and beta distributions. This distribution can be used as an alternative to the Kumaraswamy distribution since it has a closed form for its cumulative function. However, it can be fitted to data where there are some samples that are exactly equal to 1, unlike the Kumaraswamy and beta distributions which cannot be fitted to such data or may require some censoring. Applications considered show the extended power distribution performs favourably against the Kumaraswamy distribution in most cases.
\end{abstract}
\begin{nouppercase}
\maketitle
\end{nouppercase}
\section{Introduction}
In this work, we propose an interesting two-parameter probability distribution with bounded support and propose this as an alternative to the Kumaraswamy and beta distributions. The proposed distribution is bounded on $(0, 1)$, just like the beta and Kumaraswamy distributions. However, the extended power distribution has some advantage over the beta distribution, since its cumulative distribution can be obtained in closed form. We will explore properties of this distribution such as moments, quantiles and cumulative distribution. We even go further to give a closed form for its complementary distribution as discussed in \cite{jones2002complementary} and \cite{jones2009kumaraswamy} for the beta and Kumaraswamy distributions respectively.

The extended power distribution is an extension of the power function distribution (which is a special case of the beta distribution). However, it has the advantage of being easily extendable to a multi-parameter case, which add extra flexibility when fitting to observed samples. We can also easily obtain the cumulative distribution function of the generalised extended power distribution in closed form unlike the generalised beta distribution.

Other distributions with bounded support have been investigated in statistical literature, such as the beta distribution, truncated normal distribution, log-Lindley distribution (\cite{gomez2014log}) and the Kumaraswamy distribution (\cite{kumaraswamy1980generalized}). The beta distribution has been used to model data arising from distribution of proportions and is widely used in bayesian analysis as a conjugate prior for sampling proportions from the binomial distribution. A beta regression model has been proposed by \cite{ferrari2004beta} for modelling responses that are proportions. Using the idea of the uniform distribution, \cite{jones2004families} and \cite{eugene2002beta} have proposed generating a new class of distributions from the beta distribution with the shape parameters controlling asymmetry. In \cite{eugene2002beta}, a new distribution called the beta-normal distribution was proposed and other properties such as moments were explored. Using the idea of \cite{eugene2002beta}, other distributions arising from the beta distribution have been proposed such as the beta-exponential distribution (\cite{nadarajah2006beta}), beta-Gumbel distribution (\cite{nadarajah2004beta}), beta generalised exponential distribution (\cite{barreto2010beta}), beta-Pareto distribution (\cite{akinsete2008beta}), beta linear failure rate distribution (\cite{jafari2012beta}) among others. In \cite{jones2002complementary}, a new distribution arising from the quantile function of the beta distribution named the complementary beta distribution is proposed. However, a drawback of the beta distribution is the non-availability of its cumulative distribution in closed form. To deal with this \cite{kumaraswamy1980generalized} proposed a double bounded distribution (renamed Kumaraswamy distribution by \cite{jones2009kumaraswamy}). This distribution was originally proposed for modelling data in the field of hydrology but later suggested as an alternative to the beta distribution with a closed form for its cumulative distribution and a simple density function without any special functions. A new class of distributions arising from the Kumaraswamy distribution was proposed by \cite{cordeiro2011new}. These are  sometimes called the Kumaraswamy-G distribution. Some new distributions proposed include the Kumaraswamy Weibull distribution (\cite{cordeiro2010kumaraswamy}), Kumaraswamy Gumbel distribution (\cite{cordeiro2012kumaraswamy}) and the Kumaraswamy generalised gamma distribution (\cite{de2011kumaraswamy}) among others. However, the Kumaraswamy distribution (just like the beta distribution) is unable to fit data in which some sample points are exactly $1$. The extended power distribution has some interesting advantages as an alternative to the beta distribution with its simple qunatile function and interesting complementary distribution.

In this work, we propose a new bounded distribution with  a closed form cumulative distribution function and with additional flexibility through generalisation. We will calculate moments of this distribution for the two parameter case and give its quantile function to enable simulations. We also show that for the generalised case with multiple parameters, simulation simply involves finding the feasible solution to some polynomial equation. We use applications to show that the extended power distribution performs favourably against the Kumaraswamy distribution. In section 2, we explore the origin and basic properties of the extended power distribution as well as special cases of the distribution such as the linear failure rate distribution, exponential and Raleigh distribution. In section 3, we calculate the moments and quantile of this distribution and give a procedure for calculating the maximum likelihood estimates of the parameters. We also explore the distribution of order statistics for the minimum and maximum and give a closed form for the generalised extended power distribution as well as discuss its basic properties. In section 4, we propose the complementary extended power distribution using the quantile distribution of the extended power distribution. We also obtain the moments, quantiles and give special cases of the complementary distribution. Conclusion and further discussions are given in section 5.

\section{Basics and special cases}
The name "extended power distribution" is obtained from the fact that the cumulative distribution function of the extended power function is derived from an extension of the power function. This was motivated by extension of the single parameter power warping function to a warping function with $r$ parameters (the warping functions are used in functional data analysis for aligning curves). The power function is given by
\begin{equation}\label{eqn:eqn2}
  G(t)=t^{\alpha}=\exp(\alpha \log (t)).
\end{equation}
We extended equation \ref{eqn:eqn2} in powers of $\log(t)$ and the two parameter case is what we have as the cumulative distribution function of the extended power function which is given in equation \ref{eqn:eqn3}
\begin{equation}\label{eqn:eqn3}
  F(t)=\exp\{\alpha_{0} \log(t)-\alpha_{1}(\log (t))^{2}\}.
\end{equation}
The probability density function of the extended power distribution (EPD) with parameters $\alpha_{0}$ and $\alpha_{1}$ is given as follows
\begin{equation}\label{eqn:eqn1}
  f(t)=\left\{\frac{\alpha_{0}-2\alpha_{1}\log (t)}{t}\right\}\exp\left\{\alpha_{0}\log (t) -\alpha_{1}(\log (t))^{2}\right\}, \quad t\in (0, 1).
\end{equation}
The shape parameters for this distribution satisfy $\alpha_{0}>0$ and $\alpha_{1}\geq 0$.
An implication of the relationship between the power function and the extended power distribution is that for $\alpha_{1}=0$, the extended power function reduces to a special case of beta distribution.
\begin{figure}[ht]
      \centering
    \includegraphics[scale=0.8]{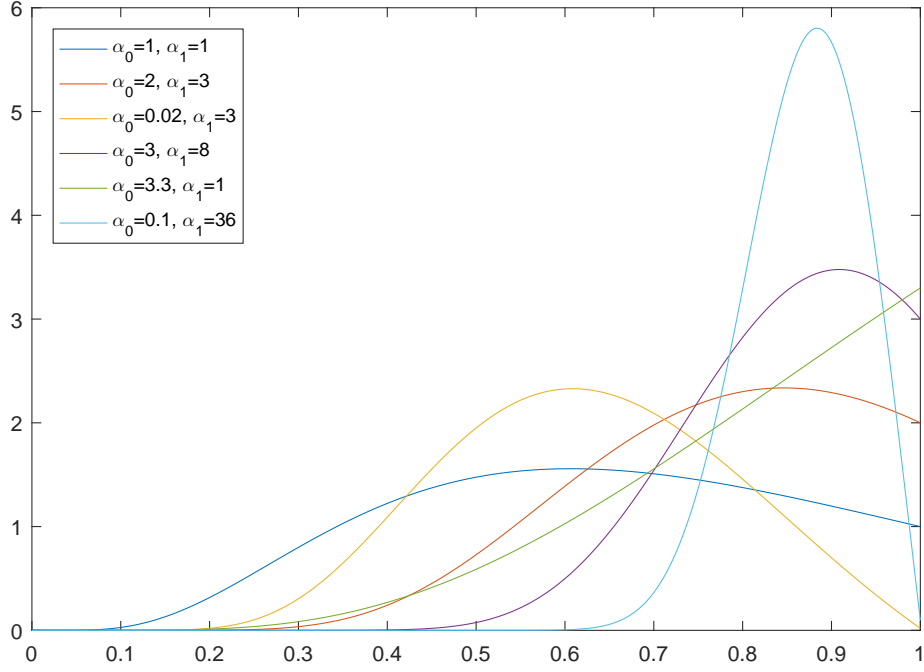}
    \caption{Plots of the extended power distribution for different values of $\alpha_{0}$ and $\alpha_{1}$}
    \label{fig:fig1}
  \end{figure}

If $\alpha_{1}=0$, then
 \begin{equation}\label{eqn:eqn4}
f(t)=\alpha_{0}t^{\alpha_{0}-1}
 \end{equation}
 which is a special case of the beta distribution with $\beta=1$. This special case is in fact the power function distribution, and is obtainable from the Kumaraswamy distribution by setting $\beta=1$.  Recall that the density function for the beta distribution is
 \begin{equation*}
   g(t)=\frac{t^{\alpha-1}(1-t)^{\beta-1}}{B(\alpha, \beta)}
 \end{equation*}
 where $B(\alpha, \beta)$ is beta function. The density function for the Kumaraswamy distribution is
\begin{equation*}
   g(t)=\alpha \beta t^{\alpha-1}(1-t^{\alpha})^{\beta-1}
 \end{equation*}
For $\alpha_{0}=1$ and $\alpha_{1}=0$, the extended power distribution reduces to the uniform distribution on $(0, 1)$. In a similar manner, the Beta(1, 1) and Kumaraswamy(1, 1) gives the uniform on $(0, 1)$ (see \cite{jones2009kumaraswamy}). If T follows the extended power distribution with parameters $\alpha_{0}$ and $\alpha_{1}$, then $V=-\log (T)$ is a random variable from the linear failure rate distribution (\cite{bain1974analysis}, \cite{sarhan2009generalized}) with density function
\begin{equation*}
  f(v)=(\alpha_{0}+2\alpha_{1}v)\exp\{-\alpha_{0}v-\alpha_{1}v^{2}\}, \quad 0<v<\infty.
\end{equation*}
Properties of this distribution including moments and quantiles have been studied by \cite{sen1995inference} and \cite{sen2006linear}.
For $\alpha_{1}=0$, we have
\begin{equation*}
  f(v)=\alpha_{0}\exp\{-\alpha_{0}v\}, \quad 0<v<\infty
\end{equation*}
which is an exponential distribution with parameter $\alpha_{0}$. If we allow $\alpha_{0}=0$ in the linear failure rate distribution, then $V$ reduces to a random variable from the Raleigh distribution with scale parameter $\sqrt{\frac{1}{2\alpha_{1}}}$ and density
\begin{equation*}
  f(v)=2\alpha_{1}v\exp\{-\alpha_{1}v^{2}\}, \quad 0<v<\infty.
\end{equation*}
As stated earlier, an advantage of the extended power distribution over the beta distribution is that we have its cumulative distribution function in closed form. With a closed form for the cumulative distribution and invertibility, it is possible to easily use the probability integral transform for simulation. If we define the $U\sim U(0, 1)$, then equating $F(T)=U$ from equation \ref{eqn:eqn3}, we have
\begin{equation*}
  \alpha_{1}(\log(T))^{2}-\alpha_{0}\log(T)+\log(U)=0
\end{equation*}
hence,
\begin{align}\label{eqn:eqn31}
  T=\begin{cases}
      \exp\left\{\frac{\alpha_{0}-\sqrt{\alpha_{0}^{2}-4\alpha_{1}\log(U)}}{2\alpha_{1}}\right\}, & \mbox{if } \alpha_{1}\neq 0 \\
      U^{\frac{1}{\alpha_{0}}}, & \mbox{otherwise}.
    \end{cases}
\end{align}
This random variable generator like that specified for the Kumaraswamy distribution (as mentioned in \cite{jones2009kumaraswamy}) is less complicated than those required to simulate from the beta distribution. To simulate a random variate T from the extended power distribution, we simply simulate U from $U(0, 1)$ and evaluate equation \ref{eqn:eqn31}. The limiting behaviour of the extended power distribution is as follows
\begin{eqnarray*}
  \lim_{t \to 0}\frac{f(t)}{t^{\alpha_{0}-1}} &=& 0 \\
  \lim_{t \to 1}f(t) &=&\alpha_{0}.
\end{eqnarray*}
\section{Moments, Quantiles and Estimators}
In this section, we will estimate some relevant quantities related to the extended power distribution. An interesting property of the extended power distribution is that we can estimate quantiles in nice closed form without any special functions as against the beta distribution whose median requires special functions. However, the moments of the extended power distribution are a bit more complicated than those of the beta and Kumaraswamy distributions. We need the complementary error integral functions (sometimes denoted by erfc(.)) to specify the moments of the extended power distribution.
\subsection{Moments}
We can derive a general formula for the kth moment of the extended power distribution.
\begin{thm}\label{cor1}
The kth moment of the extended power distribution is given as
\begin{equation}\label{eqn:eqn7}
E(T^{k})=1-\frac{k}{2}\sqrt{\frac{\pi}{\alpha_{1}}}\exp\left\{\frac{(\alpha_{0}+k)^{2}}{4\alpha_{1}}\right\}\text{erfc}\bigg(\frac{\alpha_{0}+k}{2\sqrt{\alpha_{1}}}\bigg).
\end{equation}
\end{thm}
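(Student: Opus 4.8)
The plan is to pass to the log-transformed variable already introduced in Section~2. Writing $v=-\log t$, so that $T^k=e^{-kV}$ with $V$ following the linear failure rate density $f(v)=(\alpha_{0}+2\alpha_{1}v)\exp\{-\alpha_{0}v-\alpha_{1}v^{2}\}$ on $(0,\infty)$, we get
\begin{equation*}
  E(T^{k})=\int_{0}^{\infty}(\alpha_{0}+2\alpha_{1}v)\exp\{-(\alpha_{0}+k)v-\alpha_{1}v^{2}\}\,dv,
\end{equation*}
which is the same integral one reaches by substituting directly in $\int_{0}^{1}t^{k}f(t)\,dt$ using \eqref{eqn:eqn1}. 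All integrals here converge since the integrand decays like $e^{-\alpha_{1}v^{2}}$.

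The key observation is that $(\alpha_{0}+2\alpha_{1}v)\exp\{-\alpha_{0}v-\alpha_{1}v^{2}\}=-\dfrac{d}{dv}\exp\{-\alpha_{0}v-\alpha_{1}v^{2}\}$. Hence I would write the integrand as $e^{-kv}\bigl(-\tfrac{d}{dv}e^{-\alpha_{0}v-\alpha_{1}v^{2}}\bigr)$ and integrate by parts. The boundary term equals $1$ (the contribution at $v=0$; the factor $e^{-\alpha_{0}v-\alpha_{1}v^{2}}$ vanishes at $v=\infty$ when $\alpha_{1}>0$), and the remaining term is $-k\int_{0}^{\infty}e^{-(\alpha_{0}+k)v-\alpha_{1}v^{2}}\,dv$, giving $E(T^{k})=1-k\int_{0}^{\infty}e^{-(\alpha_{0}+k)v-\alpha_{1}v^{2}}\,dv$.

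It then remains to evaluate this Gaussian-type integral by completing the square, $-\alpha_{1}v^{2}-(\alpha_{0}+k)v=-\alpha_{1}\bigl(v+\tfrac{\alpha_{0}+k}{2\alpha_{1}}\bigr)^{2}+\tfrac{(\alpha_{0}+k)^{2}}{4\alpha_{1}}$, and substituting $s=\sqrt{\alpha_{1}}\bigl(v+\tfrac{\alpha_{0}+k}{2\alpha_{1}}\bigr)$. The lower limit becomes $\tfrac{\alpha_{0}+k}{2\sqrt{\alpha_{1}}}$, and by the definition $\text{erfc}(x)=\tfrac{2}{\sqrt{\pi}}\int_{x}^{\infty}e^{-s^{2}}\,ds$ the integral equals $\tfrac{1}{2}\sqrt{\tfrac{\pi}{\alpha_{1}}}\exp\{\tfrac{(\alpha_{0}+k)^{2}}{4\alpha_{1}}\}\,\text{erfc}\bigl(\tfrac{\alpha_{0}+k}{2\sqrt{\alpha_{1}}}\bigr)$. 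Collecting the constants yields \eqref{eqn:eqn7}.

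I do not expect a genuine obstacle: the only point needing care is that the erfc expression presupposes $\alpha_{1}>0$, whereas for $\alpha_{1}=0$ the integral is elementary. One can either treat that case separately (recovering $E(T^{k})=\alpha_{0}/(\alpha_{0}+k)$, consistent with the power-function density \eqref{eqn:eqn4}) or verify it as the limit $\alpha_{1}\downarrow 0$, using $\sqrt{\pi/\alpha_{1}}\,e^{c^{2}/\alpha_{1}}\,\text{erfc}(c/\sqrt{\alpha_{1}})\to 1/c$. Beyond that, the work is just bookkeeping of constants through the change of variables.
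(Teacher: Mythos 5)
Your proposal is correct, and every step checks out: the change of variables $v=-\log t$ gives $E(T^k)=\int_0^\infty(\alpha_0+2\alpha_1 v)e^{-(\alpha_0+k)v-\alpha_1 v^2}\,dv$, the prefactor is indeed $-\tfrac{d}{dv}e^{-\alpha_0 v-\alpha_1 v^2}$ so integration by parts yields $1-k\int_0^\infty e^{-(\alpha_0+k)v-\alpha_1 v^2}\,dv$, and completing the square produces exactly the erfc expression in \eqref{eqn:eqn7}; your limiting check as $\alpha_1\downarrow 0$ recovering $\alpha_0/(\alpha_0+k)$ is also right. The route is organized differently from the paper's: the paper substitutes $u=\alpha_0\log t-\alpha_1(\log t)^2$ (i.e.\ $u=\log F(t)$, so $f(t)\,dt=e^{u}\,du$), then inverts to write $t^k$ in terms of $\sqrt{\alpha_0^2-4\alpha_1 u}$ and makes a second substitution $v=k(\alpha_0^2-4\alpha_1 u)^{1/2}$, finally splitting the resulting integral $\int v\exp\{-(v+k^2)^2/(4\alpha_1 k^2)\}\,dv$ into an exact-derivative piece (which supplies the leading $1$) plus a Gaussian tail (which supplies the erfc). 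Your version performs the analogous split at the outset, in the much friendlier coordinate $v=-\log t$, which keeps all intermediate expressions free of square roots and makes the appearance of the boundary term $1$ transparent; it also connects the computation to the linear failure rate distribution already introduced in Section~2, and handles the $\alpha_1=0$ degeneration cleanly. The paper's version has the minor virtue of working directly through the closed-form CDF, but at the cost of heavier bookkeeping. Both are elementary and arrive at the same Gaussian integral, so I would regard yours as a cleaner but essentially equivalent derivation.
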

\begin{proof}

\begin{equation*}
E(T^{k})=\int_{0}^{1}t^{k}f(t)dt=\int_{0}^{1}t^{k-1}(\alpha_{0}-2\alpha_{1}\log(t))\exp\left\{\alpha_{0}\log(t)-\alpha_{1}(\log(t))^{2}\right\}dt
\end{equation*}
Defining $u=\alpha_{0}\log(t)-\alpha_{1}(\log(t))^{2}$, we have
\begin{equation*}
E(T^{k})=\exp\left\{\frac{\alpha_{0}k}{2\alpha_{1}}\right\}\int_{-\infty}^{0}\exp\left\{\frac{k(\alpha_{0}^{2}-4\alpha_{1}u)^{1/2}}{-2\alpha_{1}}. \right\}\exp\{u\}du.
\end{equation*}
Let $v=k(\alpha_{0}^{2}-4\alpha_{1}u)^{1/2}$, this implies
\begin{equation*}
E(T^{k})=\frac{\exp\left\{\frac{(\alpha_{0}+k)^{2}}{4\alpha_{1}}\right\}}{2\alpha_{1}k^{2}}\int_{\alpha_{0}k}^{\infty}v\exp\left\{-\frac{(v+k^{2})^{2}}{4\alpha_{1}k^{2}}\right\}dv.
\end{equation*}
Therefore,
\begin{equation*}
E(T^{k})=1-\frac{k}{2}\sqrt{\frac{\pi}{\alpha_{1}}}\exp\left\{\frac{(\alpha_{0}+k)^{2}}{4\alpha_{1}}\right\}\text{erfc}\bigg(\frac{\alpha_{0}+k}{2\sqrt{\alpha_{1}}}\bigg).
\end{equation*}
\end{proof}
From theorem \ref{cor1}, we have the following results for $E(T)$ and $Var(T)$
\begin{align*}
E(T)=&1-\frac{1}{2}\sqrt{\frac{\pi}{\alpha_{1}}}\exp\left\{\frac{(\alpha_{0}+1)^{2}}{4\alpha_{1}}\right\}\text{erfc}\bigg(\frac{\alpha_{0}+1}{2\sqrt{\alpha_{1}}}\bigg)\\
Var(T)=&\bigg( \sqrt{\frac{\pi}{\alpha_{1}}}\exp\left\{\frac{(\alpha_{0}+1)^{2}}{4\alpha_{1}}\right\}\text{erfc}\bigg(\frac{\alpha_{0}+1}{2\sqrt{\alpha_{1}}}\bigg)\bigg)\bigg( 1-\frac{1}{4}\sqrt{\frac{\pi}{\alpha_{1}}}\exp\left\{\frac{(\alpha_{0}+1)^{2}}{4\alpha_{1}}\right\}\text{erfc}\bigg(\frac{\alpha_{0}+1}{2\sqrt{\alpha_{1}}}\bigg)\bigg)\\
&-\sqrt{\frac{\pi}{\alpha_{1}}}\exp\left\{\frac{(\alpha_{0}+2)^{2}}{4\alpha_{1}}\right\}\text{erfc}\bigg(\frac{\alpha_{0}+2}{2\sqrt{\alpha_{1}}}\bigg)
\end{align*}
where $\text{erfc}(x)=2(1-\Phi(x\sqrt{2}))$ and $\Phi(.)$ is the normal CDF.
\subsection{Quantiles and Mode}
The $p$th quantile function for the extended power distribution is easily obtainable from the cumulative distribution function and is given as
\begin{equation}\label{eqn:eqn8}
Q_{p}=\exp\left\{\frac{\alpha_{0}-\sqrt{\alpha_{0}^{2}-4\alpha_{1}\log(p)}}{2\alpha_{1}}\right\}.
\end{equation}
Estimating the median in closed form is straightforward from equation \ref{eqn:eqn8} and it can be written as shown in equation \ref{eqn:eqn9}.
\begin{equation}\label{eqn:eqn9}
Q_{0.5}=\exp\left\{\frac{\alpha_{0}-\sqrt{\alpha_{0}^{2}-4\alpha_{1}\log(0.5)}}{2\alpha_{1}}\right\}
\end{equation}
This expression for the median  is obviously an improvement on the beta distribution, where the median is expressed in terms of the incomplete beta function. Since the first derivative of the density ($f(t)$) is easily obtainable, we can calculate the mode of the extended power distribution in closed form.
The mode for the extended power distribution is given as
\begin{equation}\label{eqn:eqn10}
M=\exp\left\{ \frac{(2\alpha_{0}-1)-\sqrt{1+8\alpha_{1}}}{4\alpha_{1}}\right\}.
\end{equation}
\subsection{Maximum Likelihood Estimators}
Like in the beta and Kumaraswamy distributions, there is no simple form for the maximum likelihood estimators (MLEs) of $\alpha_{0}$ and $\alpha_{1}$. However, we can use a non-linear optimisation procedure to estimate these parameters numerically.
Given n random samples from the extended power distribution $t_{1}, t_{2}, \ldots, t_{n}$, the log-likelihood function is
\begin{equation*}
\ell(\alpha_{0}, \alpha_{1})=\sum_{i=1}^{n}\log (\alpha_{0}-2\alpha_{1}\log t_{i})-\sum_{i=1}^{n}\log t_{i}+\alpha_{0}\sum_{i=1}^{n}\log t_{i}-\alpha_{1}\sum_{i=1}^{n}(\log t_{i})^{2}.
\end{equation*}
Differentiating $\ell(\alpha_{0}, \alpha_{1})$ w.r.t $\alpha_{0}$ and $\alpha_{1}$, we obtain the system of equations
\begin{eqnarray}
  \frac{\partial \ell(\alpha_{0}, \alpha_{1}) }{\partial \alpha_{0}} &=& \sum_{i=1}^{n}\frac{1}{(\alpha_{0}-2\alpha_{1}\log t_{i})}+\sum_{i=1}^{n}\log t_{i}=0 \\ \label{eqn:eqn11}
\frac{\partial \ell(\alpha_{0}, \alpha_{1}) }{\partial \alpha_{1}} &=& 2\sum_{i=1}^{n}\frac{\log t_{i}}{(\alpha_{0}-2\alpha_{1}\log t_{i})}+\sum_{i=1}^{n}(\log t_{i})^{2}=0. \label{eqn:eqn12}
\end{eqnarray}
The observed Fisher's information matrix is
\begin{equation*}
  \bm{H}=\left[
     \begin{array}{cc}
       \sum_{i=1}^{n}\frac{1}{(\hat{\alpha}_{0}-2\hat{\alpha}_{1}\log t_{i})^{2}} &  -2\sum_{i=1}^{n}\frac{\log t_{i}}{(\hat{\alpha}_{0}-2\hat{\alpha}_{1}\log t_{i})^{2}} \\
        -2\sum_{i=1}^{n}\frac{\log t_{i}}{(\hat{\alpha}_{0}-2\hat{\alpha}_{1}\log t_{i})^{2}} & 4\sum_{i=1}^{n}\frac{(\log t_{i})^{2}}{(\hat{\alpha}_{0}-2\hat{\alpha}_{1}\log t_{i})^{2}} \\
     \end{array}
   \right].
\end{equation*}

Estimating the method of moments estimators will be more complicated, because the parameters of the distribution are contained in  special functions of the population moment.
In table \ref{tab:table1}, we simulate random samples using the probability integral transform and use the MLE method to estimate the parameters of these samples. In each case, 5000 random samples were generated with specified parameter values for $\alpha_{0}$ and $\alpha_{1}$ and estimated parameters are compared to the actual parameter values.
\begin{table}[h!]
\begin{tabular}{|p{4.8cm}||p{5.9cm}|}
 \hline
  Actual Parameters $(\alpha_{0}, \alpha_{1})$  &Maximum Likelihood estimates $(\hat{\alpha}_{0}, \hat{\alpha}_{1})$\\
 \hline
 (2, 1)& (2.0042, 1.0088)\\
 \hline
 (1, 1)&(1.0110, 1.0022)\\
 \hline
  (1.2, 3.3)&(1.2093, 3.3191)\\
\hline
 (0.02, 5)&(0.0174, 5.0162)\\
\hline
 (3, 8)&(3.0528, 8.0279)\\
\hline
 (0.8, 5)&(0.8301, 4.9695)\\
\hline
 (0.8, 25)&(0.8555, 25.5528)\\
\hline
 (1, 0.01)&(1.0047, 0.0063)\\
\hline
\end{tabular}
\caption{MLE for different simulated samples}
\label{tab:table1}
\end{table}
\subsection{Distribution of Order Statistics}
Let $T_{(1)}\leq T_{(2)} \leq T_{(3)} \leq \ldots \leq T_{(n)}$ be the order statistics of a random sample of size n from the extended power distribution with parameters $\alpha_{0}$ and $\alpha_{1}$ (EPD($\alpha_{0}$, $\alpha_{1}$)). Then the minimum, $T_{(1)}$ has density function
\begin{equation}\label{eqn:eqn28}
f(t_{(1)})=\left\{\frac{\alpha_{0}n-2\alpha_{1}n\log (t)}{t}\right\}\exp\left\{\alpha_{0}\log (t) -\alpha_{1}(\log (t))^{2}\right\}\bigg[1- \exp\left\{\alpha_{0}\log (t) -\alpha_{1}(\log (t))^{2}\right\}\bigg]^{n-1}.
\end{equation}
The minimum has the Kumaraswamy-G (Kw-G) distribution with parameters $a=1$ and $b=n$. The Kw-G distribution was introduced by \cite{cordeiro2011new} (motivated by \cite{jones2004families} work on distributions arising from beta distribution)and has density function
\begin{equation}\label{eqn:eqn29}
f(t)=abg(t)G(t)^{a-1}\bigg[1-G(t)^{a}\bigg]^{b-1}
\end{equation}
where $G(t)$ is the parent continuous cumulative distribution function.
Similarly, the maximum $T_{(n)}$ has density function
 \begin{equation}\label{eqn:eqn30}
 f(t_{(n)})=\left\{\frac{\alpha_{0}n-2\alpha_{1}n\log (t)}{t}\right\}\exp\left\{\alpha_{0}n\log (t) -\alpha_{1}n(\log (t))^{2}\right\}
 \end{equation}
 which is an extended power distribution with parameters $\alpha_{0}n$ and $\alpha_{1}n$ (EPD ($\alpha_{0}n$, $\alpha_{1}n$)).
\subsection{A Generalisation}
An important advantage of the extended power distribution is that it can be easily generalised  and will have a similar form to the two parameter case. Generalisations of the beta and Kumaraswamy distributions have been studied by \cite{gordy1998generalization}, \cite{nadarajah2003generalized}, \cite{mcdonald1995generalization} and these usually have complicated forms with cumulative distribution needing special functions (which makes simulations more complex and requires algorithms like rejection sampling). The generalised beta distribution has density function
\begin{equation}\label{eqn:eqn27}
h(t)=\frac{\mid a\mid t^{ap-1}(1-(1-c)(t/b)^{a})^{q-1}}{b^{ap}B(p, q)(1+c(t/b)^{a})^{p+q}}, \quad 0<t^{a}<b^{a}/(1-c).
\end{equation}
We can generalise the extended power distribution to have $r$ parameters and the density function is given as follows
\begin{equation}\label{eqn:eqn22}
f(t)=\frac{1}{t}\sum_{h=1}^{r}\alpha_{h-1}(\log t)^{h-1}h(-1)^{h-1}\exp\left\{\sum_{h=1}^{r}(-1)^{h-1}\alpha_{h-1}(\log t)^{h}\right\}, \quad t \in (0, 1).
\end{equation}
The $r$ parameters of this density function are $\alpha_{0}>0, \alpha_{1}\geq0, \alpha_{2}\geq0, \ldots, \alpha_{r-1}\geq0 $ are determine the shape of the distribution. Figure \ref{fig:fig6} gives plot of the density function of this distribution for the three-parameter and four-parameter cases.
\begin{figure}[ht]
    \begin{subfigure}[b]{0.5\linewidth}
 \centering
    \includegraphics[scale=0.42]{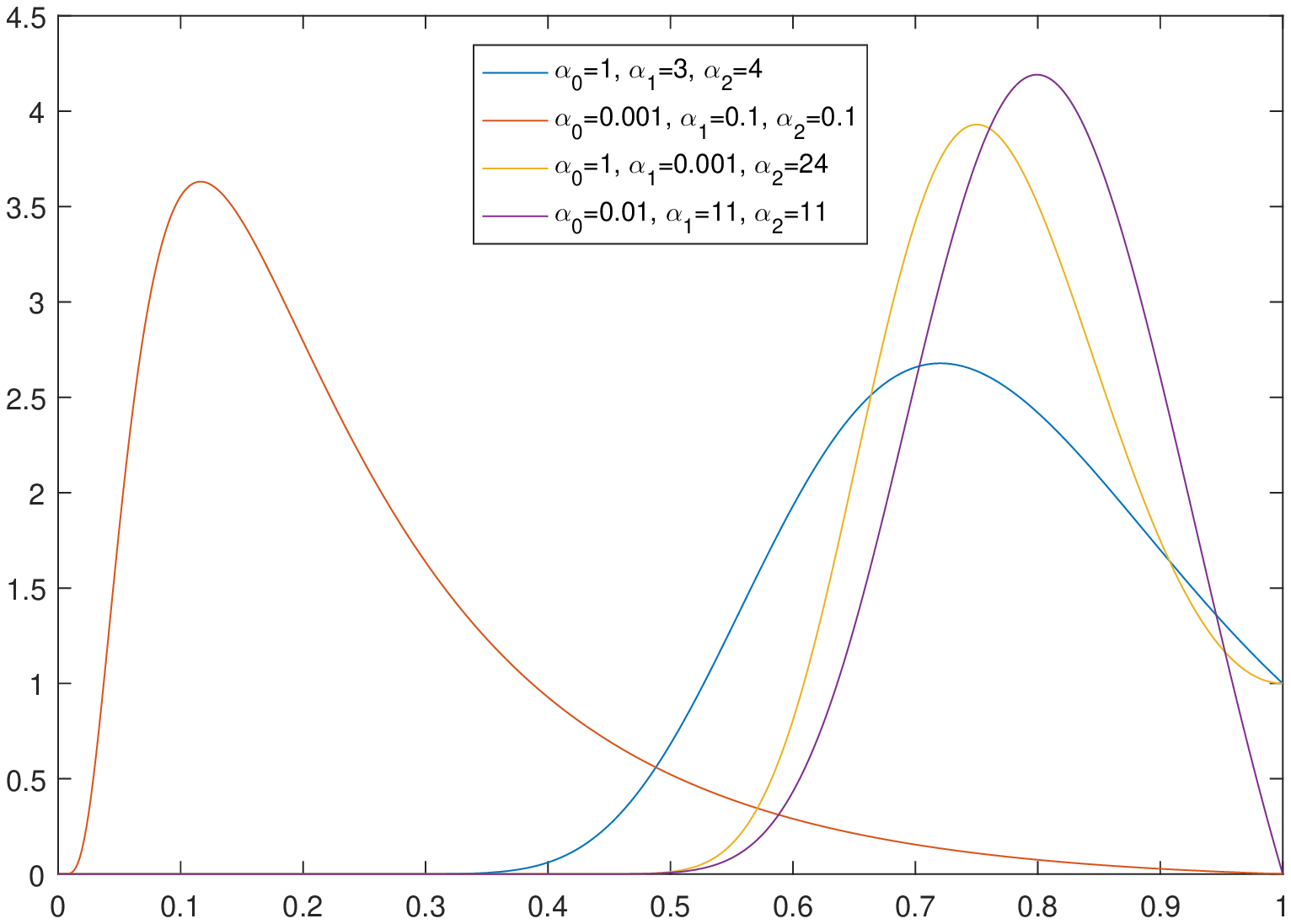}
    \caption{Shapes of the three-parameter case}
        \label{fig:fig6a}
  \end{subfigure}
  \begin{subfigure}[b]{0.5\linewidth}
    \centering
    \includegraphics[scale=0.42]{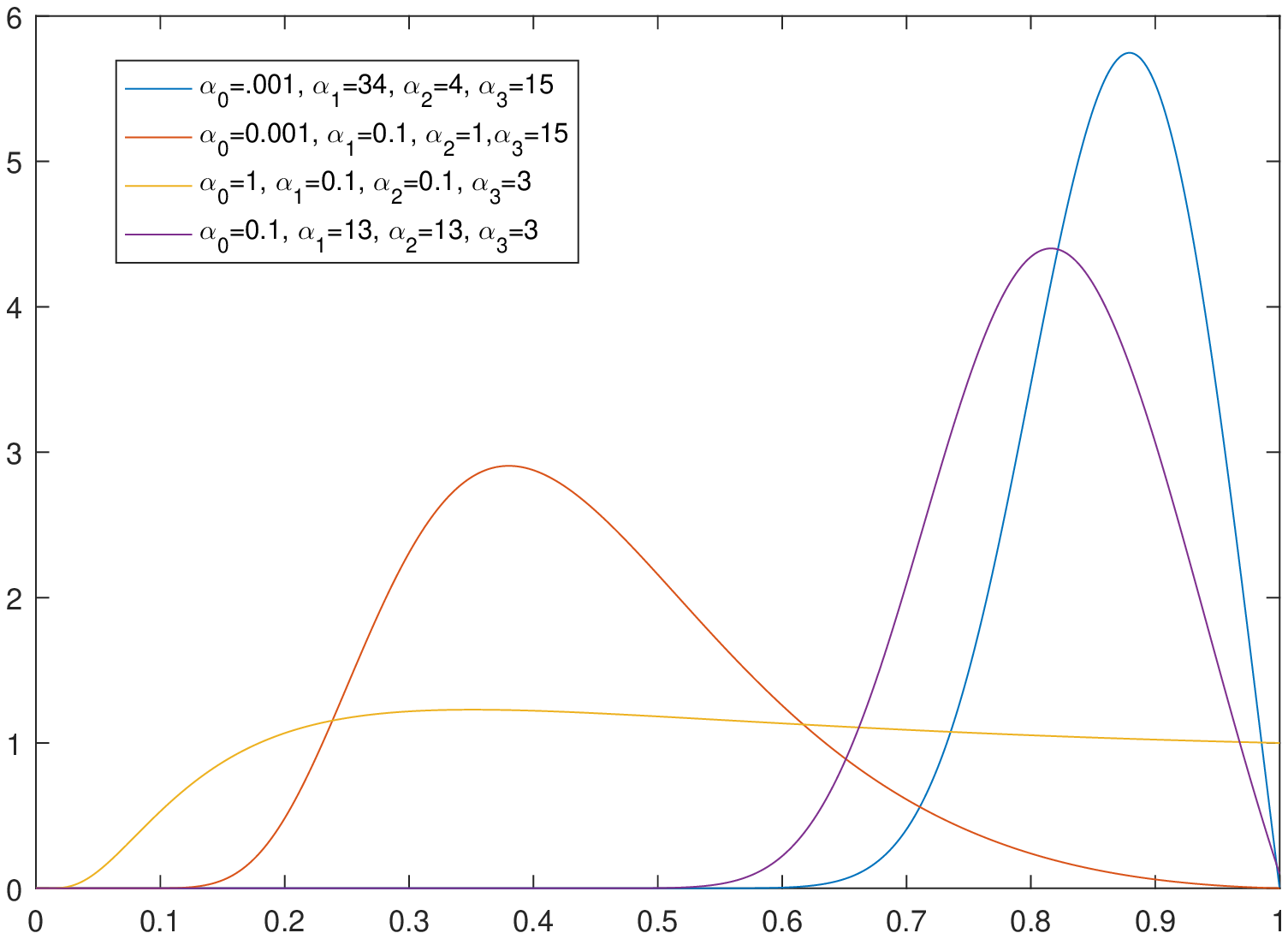}
    \caption{Shapes of the four-parameter cases}
          \label{fig:fig6b}
  \end{subfigure}
\caption{The EPD for the three and four-parameter cases}
      \label{fig:fig6}
  \end{figure}

This generalisation reduces to the two parameter extended power distribution if we set $\alpha_{2}=\alpha_{3}=\ldots=\alpha_{r-1}=0$ and to the $\text{Beta}(\alpha_{0}, 1)$, if we set $\alpha_{1}=\alpha_{2}=\alpha_{3}=\ldots=\alpha_{r-1}=0$. In a similar manner, setting $\alpha_{0}=1, \alpha_{1}=\alpha_{2}=\alpha_{3}=\ldots=\alpha_{r-1}=0$ gives the uniform distribution on $(0, 1)$. For this generalisation, the cumulative distribution function is
\begin{equation}\label{eqn:eqn23}
F(t)=\exp\left\{\sum_{h=1}^{r}(-1)^{h-1}\alpha_{h-1}(\log t)^{h}\right\}
\end{equation}
and we can simulate random variates using probability integral transform by finding the root of the polynomial
 \begin{equation*}
   \sum_{h=1}^{r}(-1)^{h-1}\alpha_{h-1}(\log T)^{h}-\log U=0
 \end{equation*}
which lies on $(0, 1)$. This is more complicated than in the two parameter case, however, there are a number of available mathematical programs for calculating roots of a polynomial and we can easily utilise the roots function in MATLAB for this purpose. A similar scenario applies when calculating the median of the generalised distribution and we obtain the median ($M$) as a root of the polynomial
\begin{equation*}
   \sum_{h=1}^{r}(-1)^{h-1}\alpha_{h-1}(\log M)^{h}-\log 0.5=0.
 \end{equation*}
Maximum likelihood estimators can be derived for the generalised extended power distribution by first obtaining the likelihood function and optimising using numerical methods.
The log-likelihood function for the generalised density is
\begin{align*}\label{eqn:eqn24}
\ell(\alpha_{0}, \alpha_{1}, \ldots, \alpha_{r-1})=&\sum_{i=1}^{n}\log\bigg(\sum_{h=1}^{r}\alpha_{h-1}(\log t_{i})^{h-1}h(-1)^{h-1}\bigg) -\sum_{i=1}^{n}\log t_{i}\\
&+\sum_{h=1}^{r}(-1)^{h-1}\alpha_{h-1}\sum_{i=1}^{n}(\log t_{i})^{h}.
\end{align*}
Differentiating the log-likelihood function with respect to $\alpha_{h-1}$, we get
\begin{equation}\label{eqn:eqn25}
\frac{\partial \ell}{\partial \alpha_{h-1}}=\sum_{i=1}^{n}\frac{(\log t_{i})^{h-1}h(-1)^{h-1}}{\sum_{h=1}^{r}\alpha_{h-1}(\log t_{i})^{h-1}h(-1)^{h-1}}+(-1)^{h-1}\sum_{i=1}^{n}(\log t_{i})^{h}
\end{equation}
and second derivative
\begin{equation}\label{eqn:eqn26}
\frac{\partial^{2} \ell}{\partial \alpha_{h-1}\alpha_{k-1}}=\sum_{i=1}^{n}\frac{(\log t_{i})^{h+k-2}hk(-1)^{h+k-1}}{\bigg(\sum_{h=1}^{r}\alpha_{h-1}(\log t_{i})^{h-1}h(-1)^{h-1}\bigg)^{2}}.
\end{equation}
\subsection{Applications and Examples}
In this subsection, we show examples where the extended power distribution is applicable. We also compare how well the extended power function fits actual data to how the Kumaraswamy distribution compares in performance. Finally, we will simulate data using the extended power distribution and try fitting the data with the Kumaraswamy distribution to see show cases where the extended power distribution has particular advantages over other bounded distributions. To fit the observed data, we will obtain maximum likelihood estimators for the parameters the distribution by maximising the log-likelihood functions for both the extended power distribution and the Kumaraswamy distribution. Table \ref{tab:table2} shows the Akaike information criterion (AIC) of the fitted distributions in each application. The corrected AIC (AICc) and the Bayesian information criterion (BIC) which penalises more for extra parameters are given in tables \ref{tab:table4} and \ref{tab:table5} respectively. The MLE for the fitted distributions are detailed in table \ref{tab:table3}. In most of the examples, the extended power distribution performed better than the Kumaraswamy distribution (using AIC, AICc and BIC). The only exception is example 2, where the Kumaraswamy had the least BIC. The difference between this BIC and the BIC of the 3 parameter EPD is quite negligible.
\begin{table}[h!]
\begin{tabular}{|p{2.3cm}|p{2.3cm}|p{2.3cm}|p{2.3cm}|p{2.3cm}|}
 \hline
Distribution  & Kumaras. &2-Par EPD &3-Par EPD & 4-Par EPD\\
 \hline
Example 1 & -88.6054 & -70.4173 & -91.9646 & $\mathbf{-92.8209}$ \\
\hline
Example 2 & -82.7203 & -67.5975 & $\mathbf{-84.6515}$ & -83.0609\\
\hline
Example 3 & -153.3079 & -170.5573 & $\mathbf{-177.6280}$ & -175.6280\\
\hline

Example 4 & -82.4570 & $\mathbf{-83.0053}$ & -81.0053 & -79.0053 \\
\hline
Example 5 &  -669.9358 & -719.9494 & -814.5105 & $\mathbf{-850.8778}$\\
\hline
Example 6 & -795.4145 & -965.8150 & $\mathbf{-996.5056}$ & $-994.5056$\\
\hline
Example 7 &  &$\mathbf{-302.7052}$ & -300.7052 & -298.7052 \\
\hline
\end{tabular}
\caption{AIC for fitted distributions in examples 1-7, with the best fitting distribution in bold}
\label{tab:table2}
\end{table}

\begin{table}[h!]
\begin{tabular}{|p{2.3cm}|p{2.3cm}|p{2.3cm}|p{2.3cm}|p{2.3cm}|}
 \hline
Distribution  & Kumaras. &2-Par EPD &3-Par EPD & 4-Par EPD\\
 \hline
Example 1 & -88.4054 & -70.2173 & -91.5579 & $\mathbf{-92.1313}$ \\
\hline
Example 2 & -82.5203 & -67.3975 & $\mathbf{-84.3447}$ & -82.3712\\
\hline
Example 3 & -153.1364 & -170.3859 & $\mathbf{-177.2802}$ & -175.0398\\
\hline

Example 4 & -82.2570 & $\mathbf{-82.8053}$ & -80.5985 & -78.3156 \\
\hline
Example 5 &  -669.9052 & -719.9188 & -814.4491 & $\mathbf{-850.7753}$\\
\hline
Example 6 & -795.4024 & -965.8030 & $\mathbf{-996.4815}$ & $-994.4654$\\
\hline
Example 7 &  &$\mathbf{-302.6230}$ & -300.5396 & -298.4274 \\
\hline
\end{tabular}
\caption{AICc for fitted distributions in examples 1-7, with the best fitting distribution in bold}
\label{tab:table4}
\end{table}

\begin{table}[h!]
\begin{tabular}{|p{2.3cm}|p{2.3cm}|p{2.3cm}|p{2.3cm}|p{2.3cm}|}
 \hline
Distribution  & Kumaras. &2-Par EPD &3-Par EPD & 4-Par EPD\\
 \hline
Example 1 & -84.3191 & -66.1311 & $\mathbf{-85.5352}$ & -84.2484 \\
\hline
Example 2 & $\mathbf{-78.4340}$ & -63.3113 & -78.2221 & -74.4883\\
\hline
Example 3 & -148.7269 & -165.9764 & $\mathbf{-170.7566}$ & -166.4662\\
\hline

Example 4 & -78.1708 & $\mathbf{-78.7190}$ & -74.5759 & -70.4327 \\
\hline
Example 5 &  -661.9780 & -711.9916 & -802.5738 & $\mathbf{-834.9623}$\\
\hline
Example 6 & -785.5990 & -955.9995 & $\mathbf{-981.7823}$ & $-974.8745$\\
\hline
Example 7 &  &$\mathbf{-296.6973}$ & -291.6933 & -286.6894 \\
\hline
\end{tabular}
\caption{BIC for fitted distributions in examples 1-7, with the best fitting distribution in bold}
\label{tab:table5}
\end{table}

\begin{table}[h!]
\begin{tabular}{|p{2cm}|p{2.3cm}|p{2.3cm}|p{2.3cm}|p{2.3cm}|}
 \hline
Distribution  & Kumaras. &2-Par EPD &3-Par EPD & 4-Par EPD\\
 \hline
Example 1 & $\hat{\alpha}=4.51, \hat{\beta}=15.21$& $\hat{\alpha}_{0}=0.00,   \hat{\alpha}_{1}= 1.72$ & $\hat{\alpha}_{0}=0.00,   \hat{\alpha}_{1}= 0.00, \hat{\alpha}_{2}=2.00$ & $\hat{\alpha}_{0}=0.00,   \alpha_{1}=0.00,  \hat{\alpha}_{2}= 0.73, \hat{\alpha}_{3}=1.39$ \\
\hline
Example 2 & $\hat{\alpha}=4.31, \hat{\beta}=13.06$& $\hat{\alpha}_{0}=0.00,   \hat{\alpha}_{1}= 1.68$ & $\hat{\alpha}_{0}=0.00,   \hat{\alpha}_{1}= 0.00, \hat{\alpha}_{2}=1.90$ & $\hat{\alpha}_{0}=0.00,   \alpha_{1}=0.00,  \hat{\alpha}_{2}= 1.45, \hat{\alpha}_{3}=0.47$\\
\hline

Example 3 & $\hat{\alpha}=0.66, \hat{\beta}=3.44$& $\hat{\alpha}_{0}=0.01,   \hat{\alpha}_{1}=0.10$ & $\hat{\alpha}_{0}=0.04,   \hat{\alpha}_{1}= 0.00, \hat{\alpha}_{2}=0.02$ & $\hat{\alpha}_{0}=0.04,   \hat{\alpha}_{1}=0.00, \hat{\alpha}_{2}=0.02, \hat{\alpha}_{3}=0.00$\\
\hline

Example 4 & $\hat{\alpha}=5.85, \hat{\beta}=3.03$& $\hat{\alpha}_{0}=0.25, \hat{\alpha}_{1}=7.03$ & $\hat{\alpha}_{0}=0.25, \hat{\alpha}_{1}= 7.03, \hat{\alpha}_{2}=0.00$ & $\hat{\alpha}_{0}=0.25, \hat{\alpha}_{1}= 7.03, \hat{\alpha}_{2}=0.00, \hat{\alpha}_{3}=0.00$ \\
\hline
Example 5 &  $\hat{\alpha}=1.00, \hat{\beta}=5.28$& $\hat{\alpha}_{0}=0.00, \hat{\alpha}_{1}=0.17$ & $\hat{\alpha}_{0}=0.03, \hat{\alpha}_{1}= 0.00, \hat{\alpha}_{2}=0.06$ & $\hat{\alpha}_{0}=0.07, \hat{\alpha}_{1}= 0.00, \hat{\alpha}_{2}=0.00, \hat{\alpha}_{3}=0.02$\\
\hline

Example 6 &  $\hat{\alpha}=3.58, \hat{\beta}= 2.05$& $\hat{\alpha}_{0}=0.51, \hat{\alpha}_{1}=3.38$ & $\hat{\alpha}_{0}=0.90, \hat{\alpha}_{1}= 0.62, \hat{\alpha}_{2}=3.22$ & $\hat{\alpha}_{0}=0.90, \hat{\alpha}_{1}= 0.62, \hat{\alpha}_{2}=3.22, \hat{\alpha}_{3}=0.00$\\
\hline

Example 7 & & $\hat{\alpha}_{0}=6.53, \hat{\alpha}_{1}=0.00$ & $\hat{\alpha}_{0}=6.53, \hat{\alpha}_{1}=0.00, \hat{\alpha}_{2}=0.00$ & $\hat{\alpha}_{0}=6.53, \hat{\alpha}_{1}= 0.00, \hat{\alpha}_{2}=0.00, \hat{\alpha}_{3}=0.00$\\
\hline
\end{tabular}
\caption{MLE for fitted distributions in examples 1-7}
\label{tab:table3}
\end{table}

\subsubsection{Example 1}
In this example, we explore data on US senate voting patterns from 1953 to 2015. The observed data are proportion of party unity votes in which a majority of voting Democrats opposed a majority of voting Republicans. The data is taken from the Brookings Institution (\cite{Brookings2017}) and is available for free.
A histogram of the actual data and its empirical cumulative distribution along with fitted density and distribution functions are given in figure \ref{fig:fig3}. From the fitted distributions, we see that the four-parameter extended power distribution best fits the actual data.
\begin{figure}[ht]
\centering
    \begin{subfigure}[b]{0.5\linewidth}
    \centering
    \includegraphics[scale=0.4]{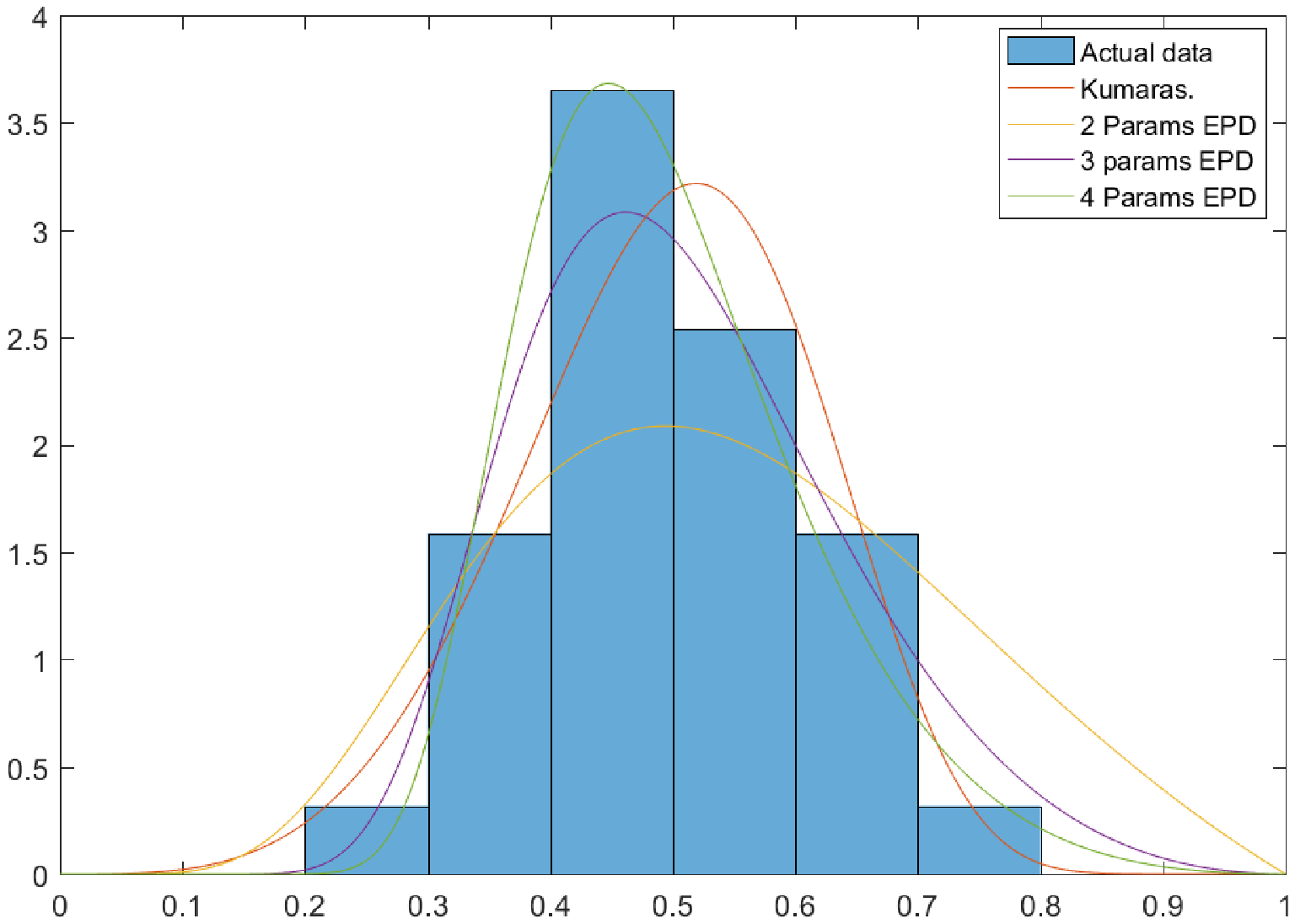}
    \caption{Density functions fitted to histogram}
        \label{fig:fig3a}
  \end{subfigure}
  \begin{subfigure}[b]{0.5\linewidth}
    \centering
    \includegraphics[scale=0.4]{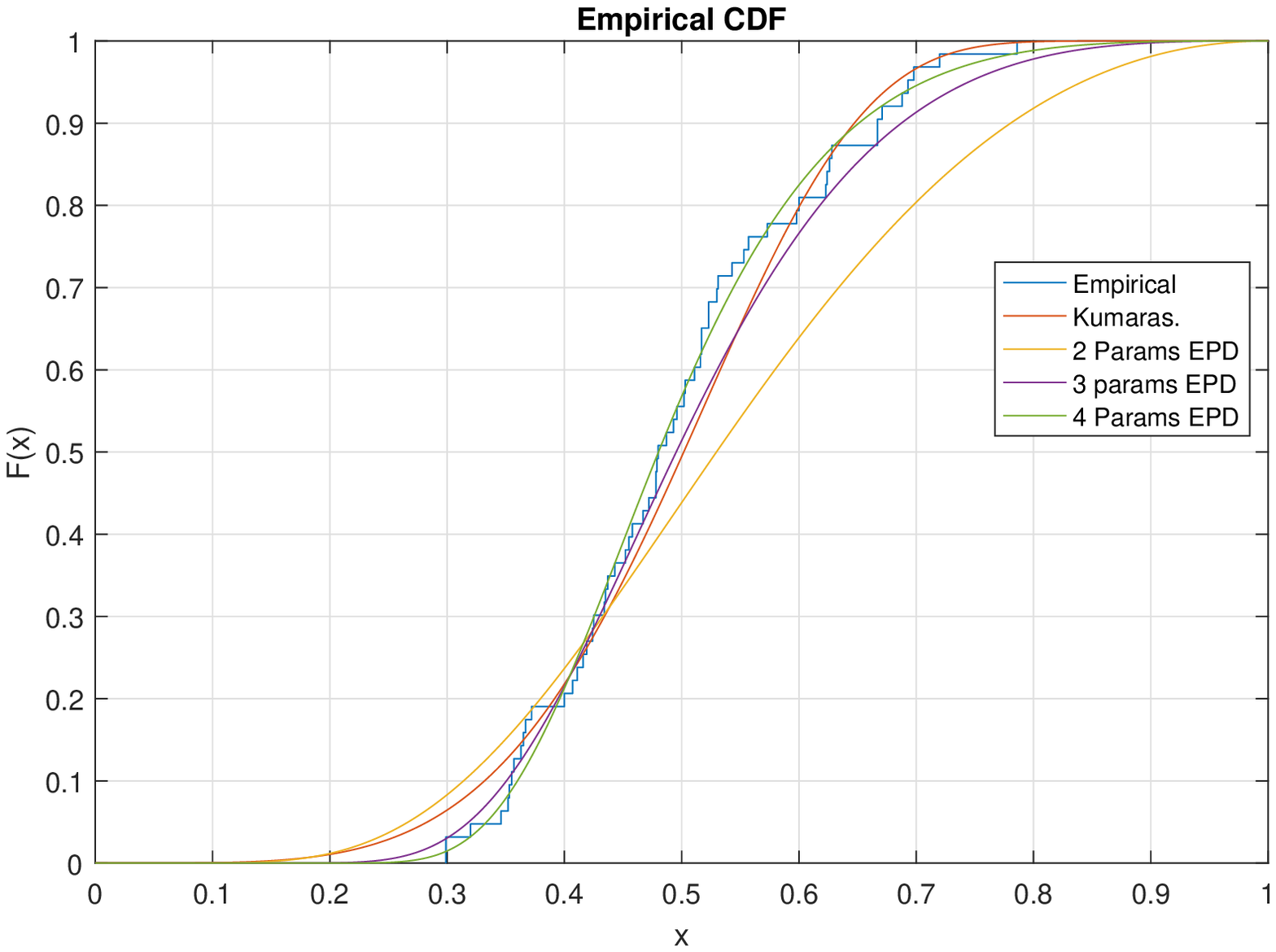}
    \caption{Fitting cdfs to the empirical cdf}
          \label{fig:fig3b}
  \end{subfigure}
\caption{Fitting proportion of US Senate party unity votes using the Kumaraswamy and extended power distributions}
      \label{fig:fig3}
  \end{figure}

\subsubsection{Example 2}
This second data explores proportion of unity votes in the US House of Representatives from 1953 to 2015. The data is also from \cite{Brookings2017}. A histogram (and empirical cumulative distribution) of the actual data and fitted distributions using MLE are shown in figure \ref{fig:fig4}. In this case, the three-parameter and four-parameter extended power distribution give the best fit, with the four-parameter case having a slightly higher peak. From the cumulative distribution plot, we see that at the lower tail, the three and four-parameter extended power distribution function best fits the empirical distribution function.
\begin{figure}[ht]
\centering
    \begin{subfigure}[b]{0.5\linewidth}
    \centering
    \includegraphics[scale=0.4]{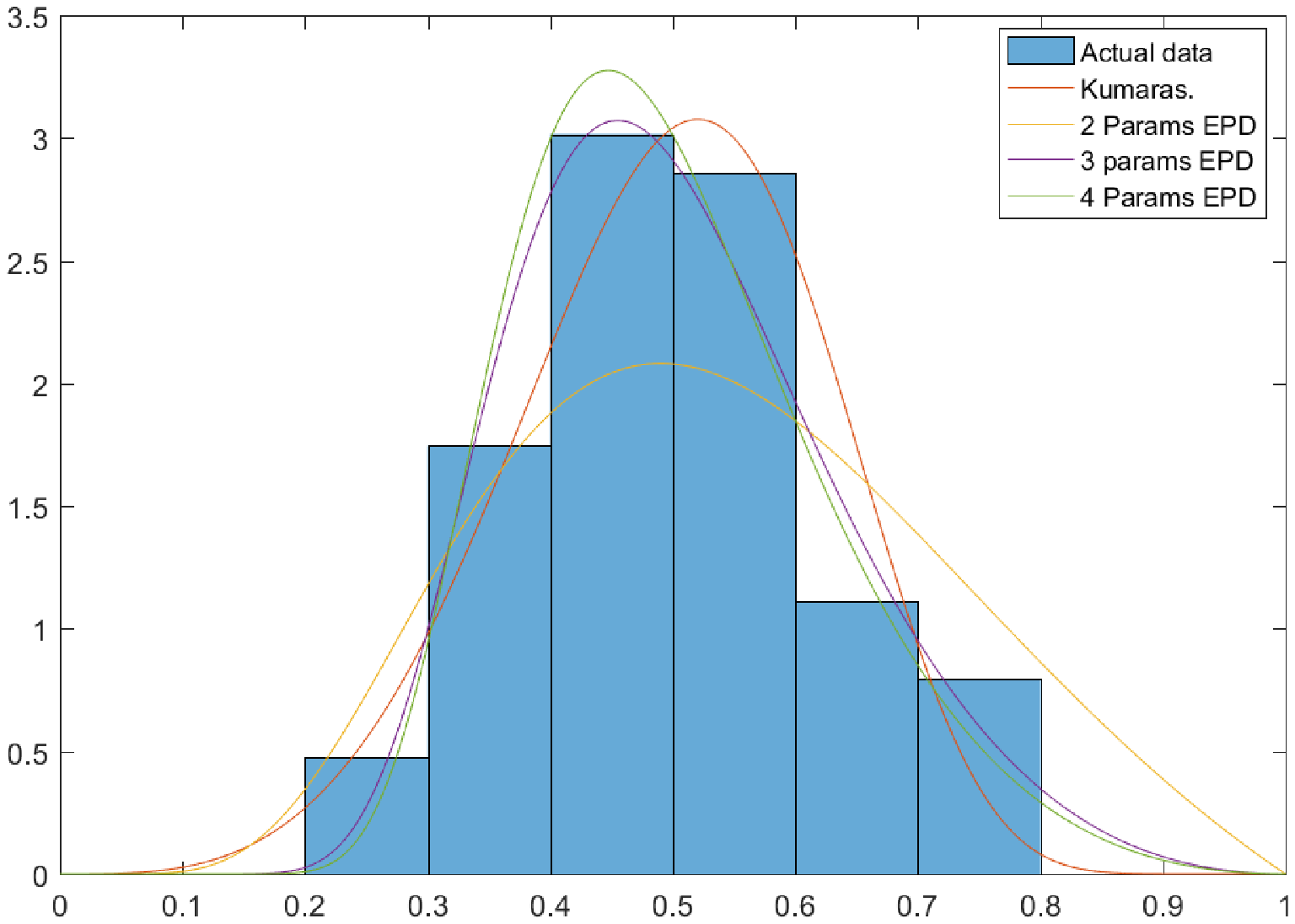}
    \caption{Density functions fitted to histogram}
        \label{fig:fig4a}
  \end{subfigure}
  \begin{subfigure}[b]{0.5\linewidth}
    \centering
    \includegraphics[scale=0.4]{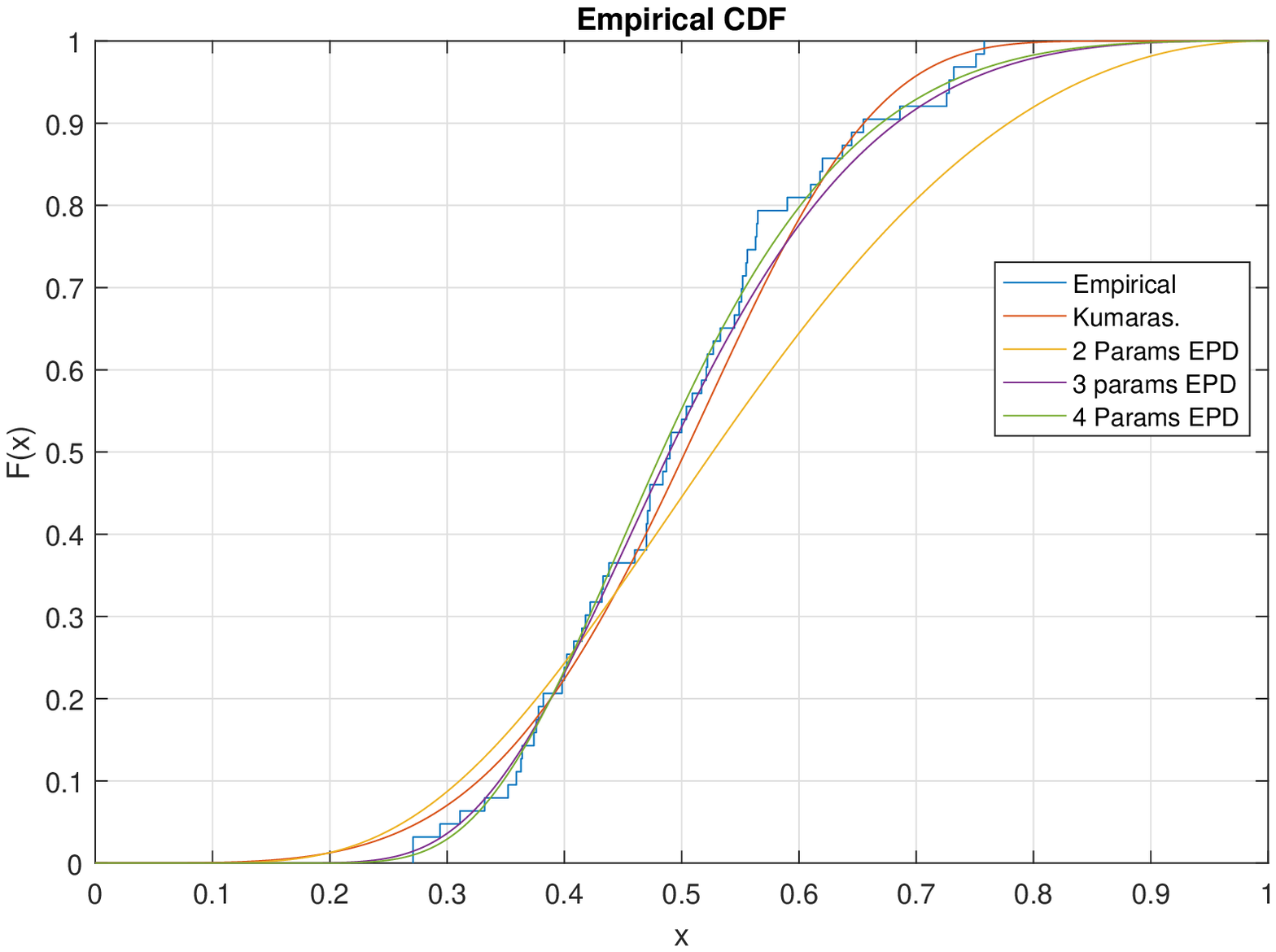}
    \caption{Fitting cdfs to the empirical cdf}
          \label{fig:fig4b}
  \end{subfigure}
\caption{Fitting proportion of US House of Representatives party unity votes using the Kumaraswamy and extended power distributions}
      \label{fig:fig4}
  \end{figure}

\subsubsection{Example 3}
This data which is taken  from \cite{jodra2016note} and is a measure of a firm's risk management cost effectiveness given as a proportion. This data has been used in \cite{jodra2016note} for regression modelling using the beta and log-Lindley distributions. The extended power distribution gives the best fit for this data and even has smaller AIC than that obtained for the log-Lindley distribution (\cite{gomez2014log}) which is $-149.2083$.
\begin{figure}[ht]
\centering
    \begin{subfigure}[b]{0.5\linewidth}
    \centering
    \includegraphics[scale=0.4]{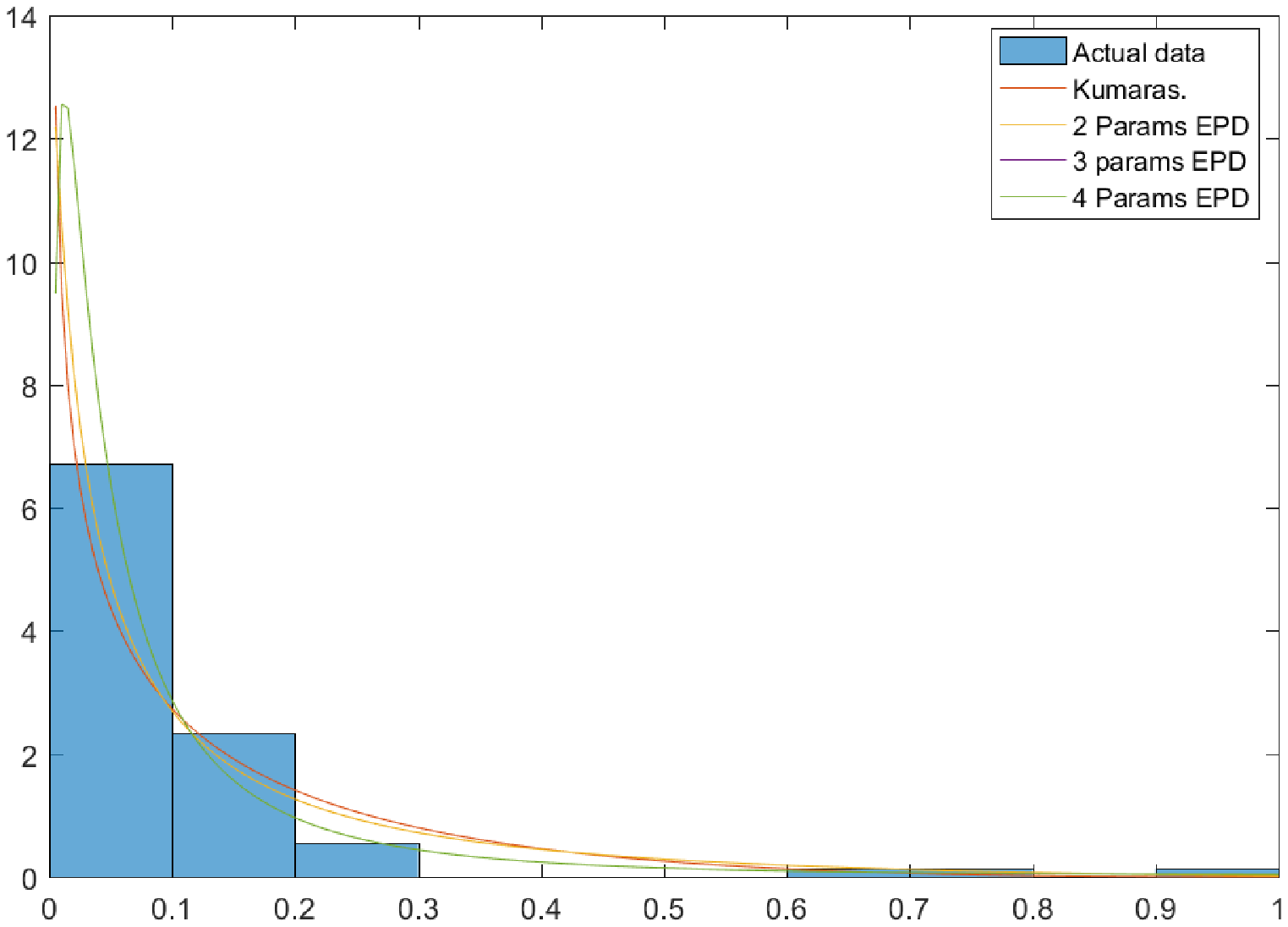}
    \caption{Density functions fitted to histogram}
        \label{fig:fig13a}
  \end{subfigure}
  \begin{subfigure}[b]{0.5\linewidth}
    \centering
    \includegraphics[scale=0.4]{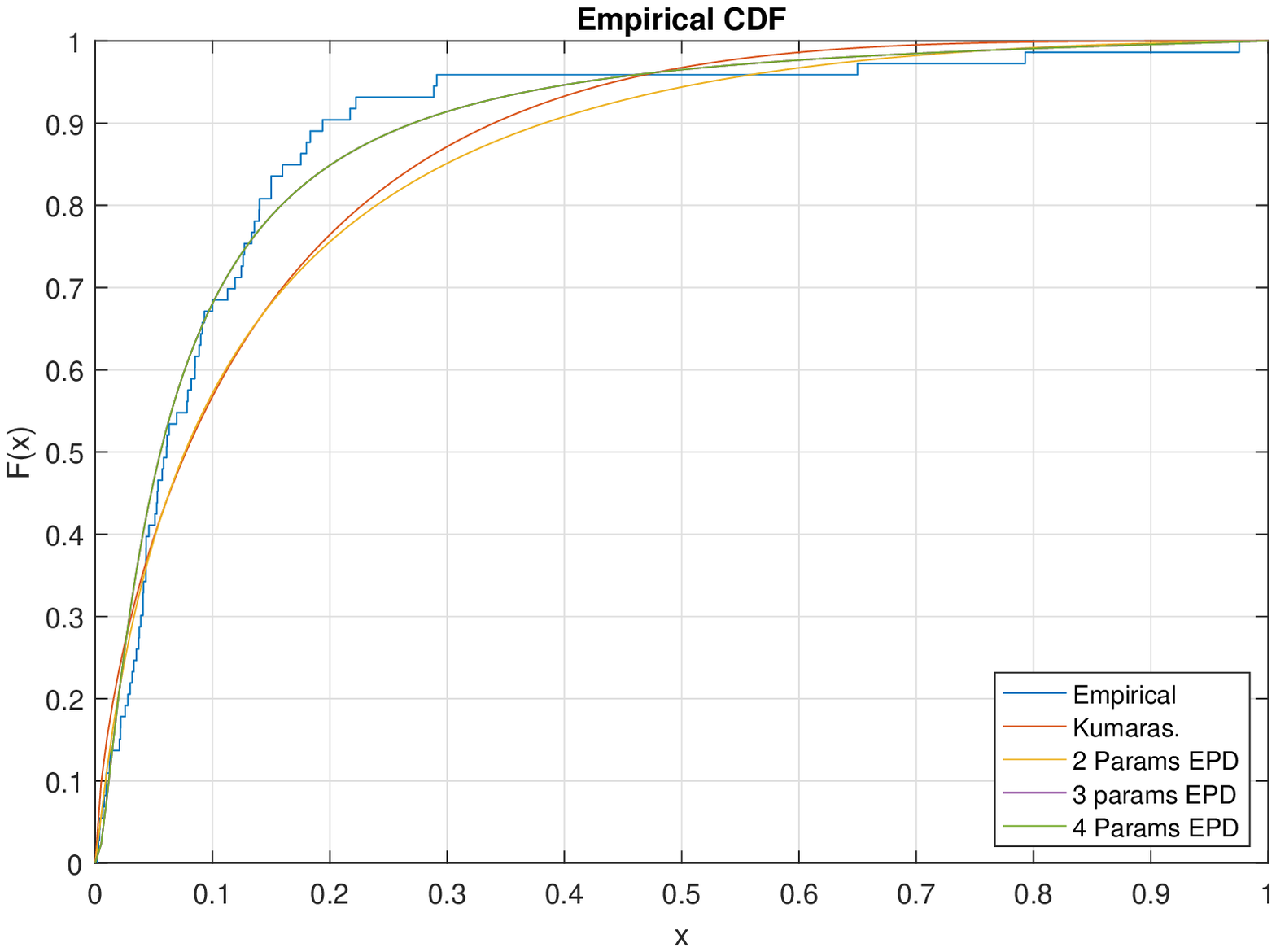}
    \caption{Fitting cdfs to the empirical cdf}
          \label{fig:fig13b}
  \end{subfigure}
\caption{Fitting risk management cost effectiveness using the Kumaraswamy and extended power distributions}
      \label{fig:fig13}
  \end{figure}

\subsubsection{Example 4}
This example involves proportion of presidential bill victories in the US Senate from President Einsenhower in 1953 to President Obama in 2015 (obtained from \cite{Brookings2017}). The fitted distributions are shown in figure \ref{fig:fig9}. In this case, the two-parameter extended power distribution seems to give the best fit.
\begin{figure}[ht]
\centering
    \begin{subfigure}[b]{0.5\linewidth}
    \centering
    \includegraphics[scale=0.4]{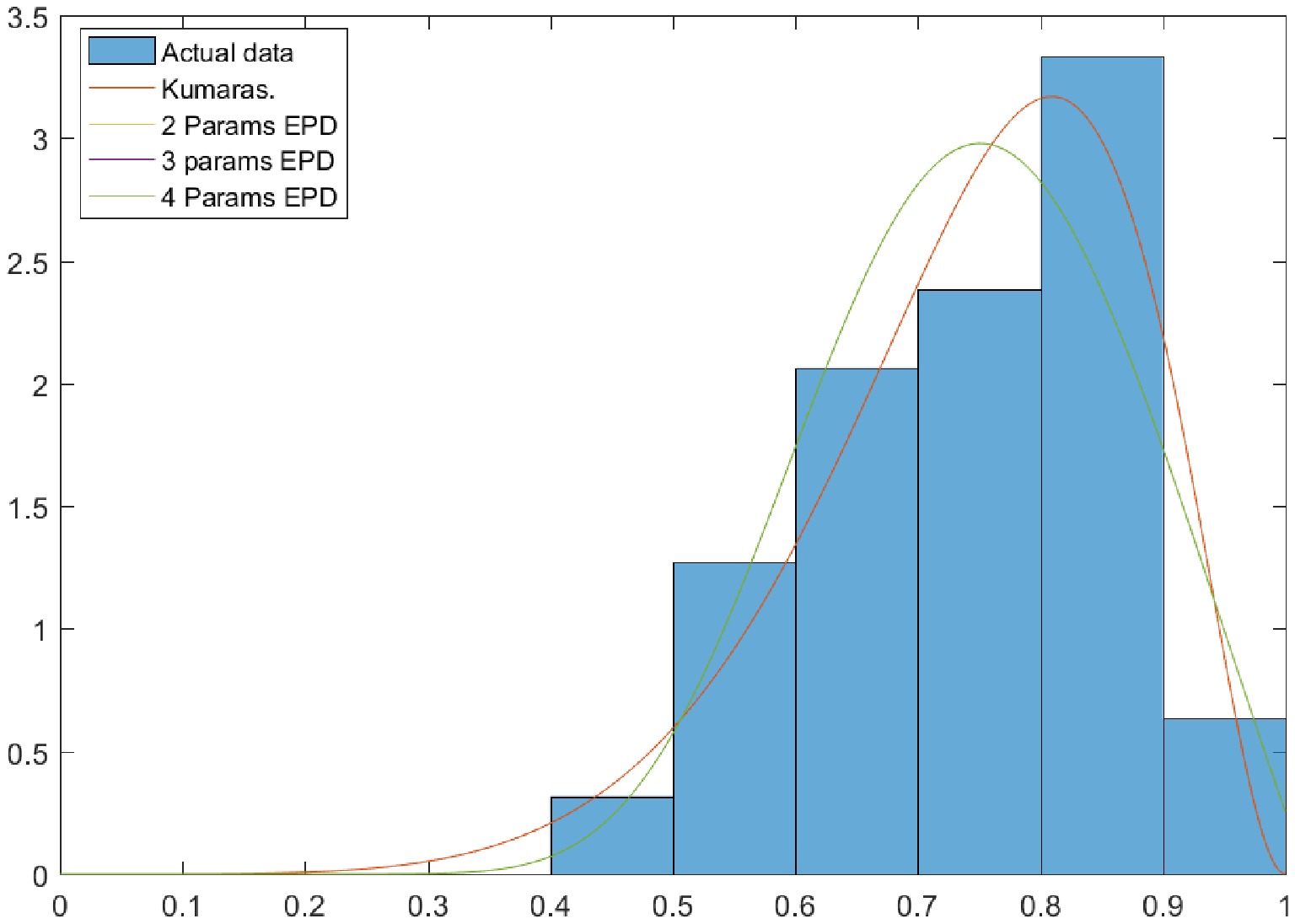}
    \caption{Density functions fitted to histogram}
        \label{fig:fig9a}
  \end{subfigure}
  \begin{subfigure}[b]{0.5\linewidth}
    \centering
    \includegraphics[scale=0.4]{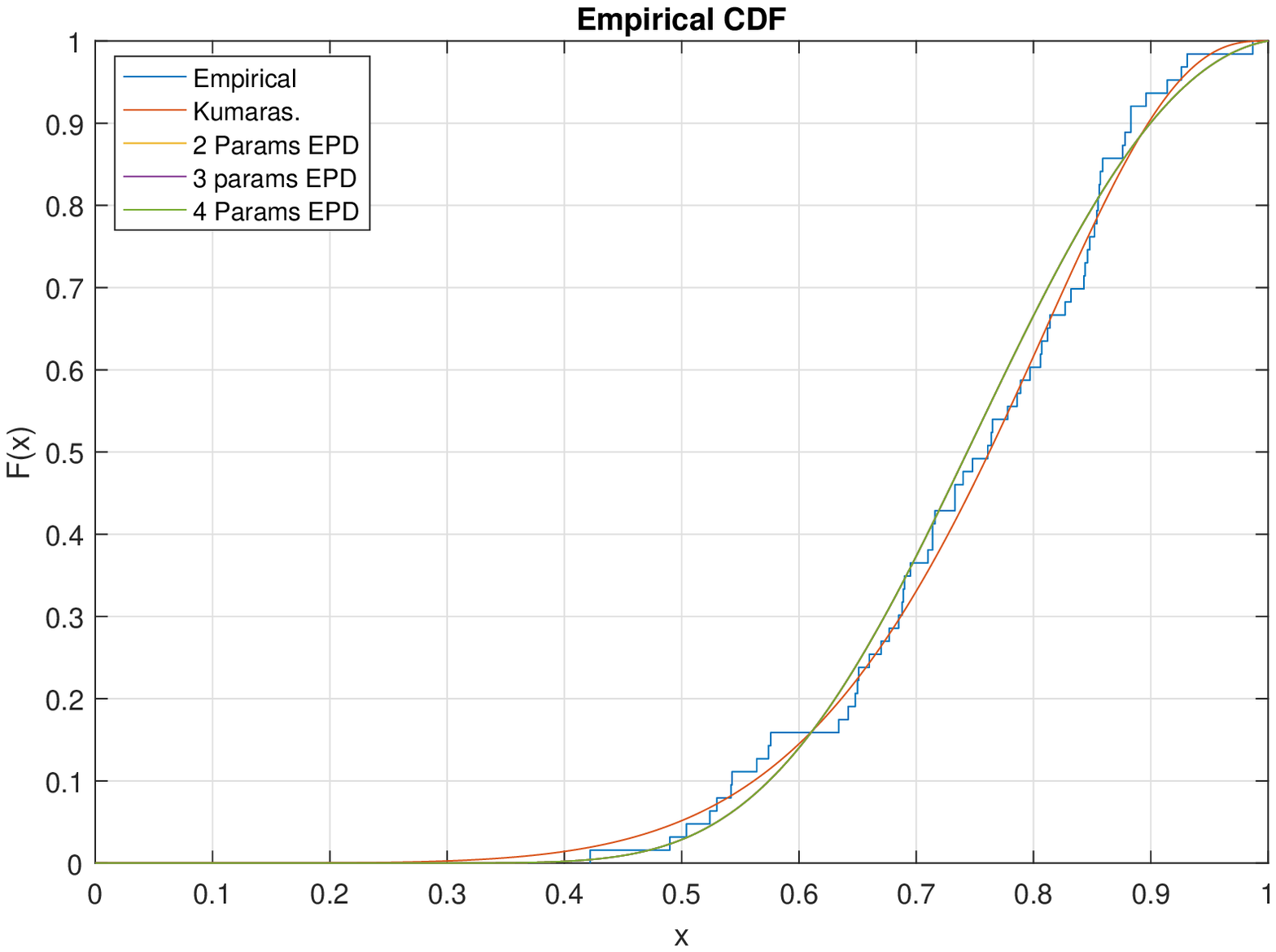}
    \caption{Fitting cdfs to the empirical cdf}
          \label{fig:fig9b}
  \end{subfigure}
\caption{Fitting proportion of US senators of who support presidential bills using the Kumaraswamy and extended power distributions}
      \label{fig:fig9}
  \end{figure}

\subsubsection{Example 5}
This example is taken from the 2011 census data on the proportion of minority ethnic groups across different local authorities in England and Wales. The data is obtained from the Office of National Statistics 2011 census (\cite{ONSrace}) and the minorities  include groups such as white Irish, white and black Caribbean, Arabian, etc. The extended power distribution has the best fit for this example.
\begin{figure}[ht]
    \begin{subfigure}[b]{0.5\linewidth}
    \centering
    \includegraphics[scale=0.4]{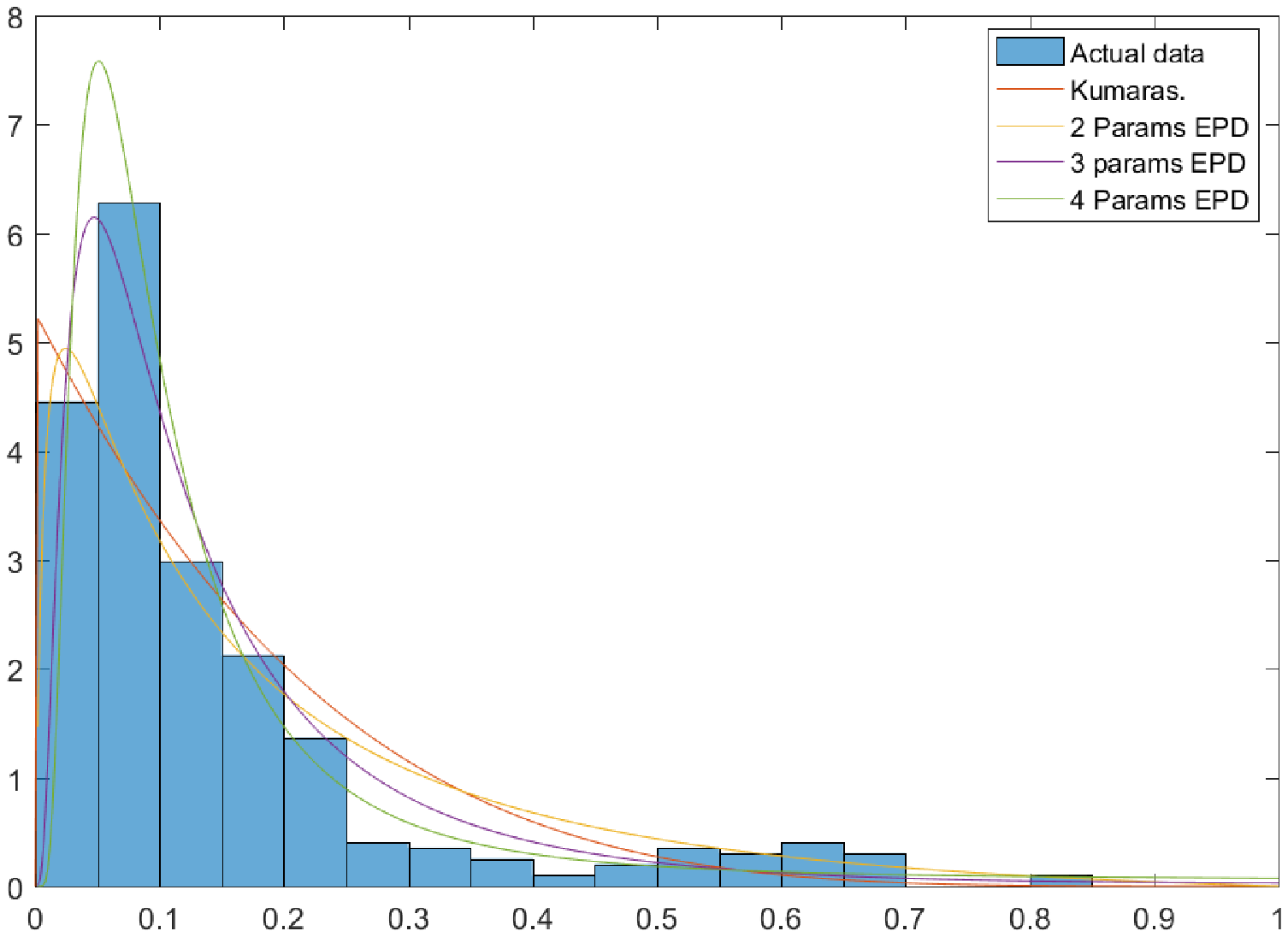}
    \caption{Density functions fitted to histogram}
        \label{fig:fig10a}
  \end{subfigure}
  \begin{subfigure}[b]{0.5\linewidth}
    \centering
    \includegraphics[scale=0.4]{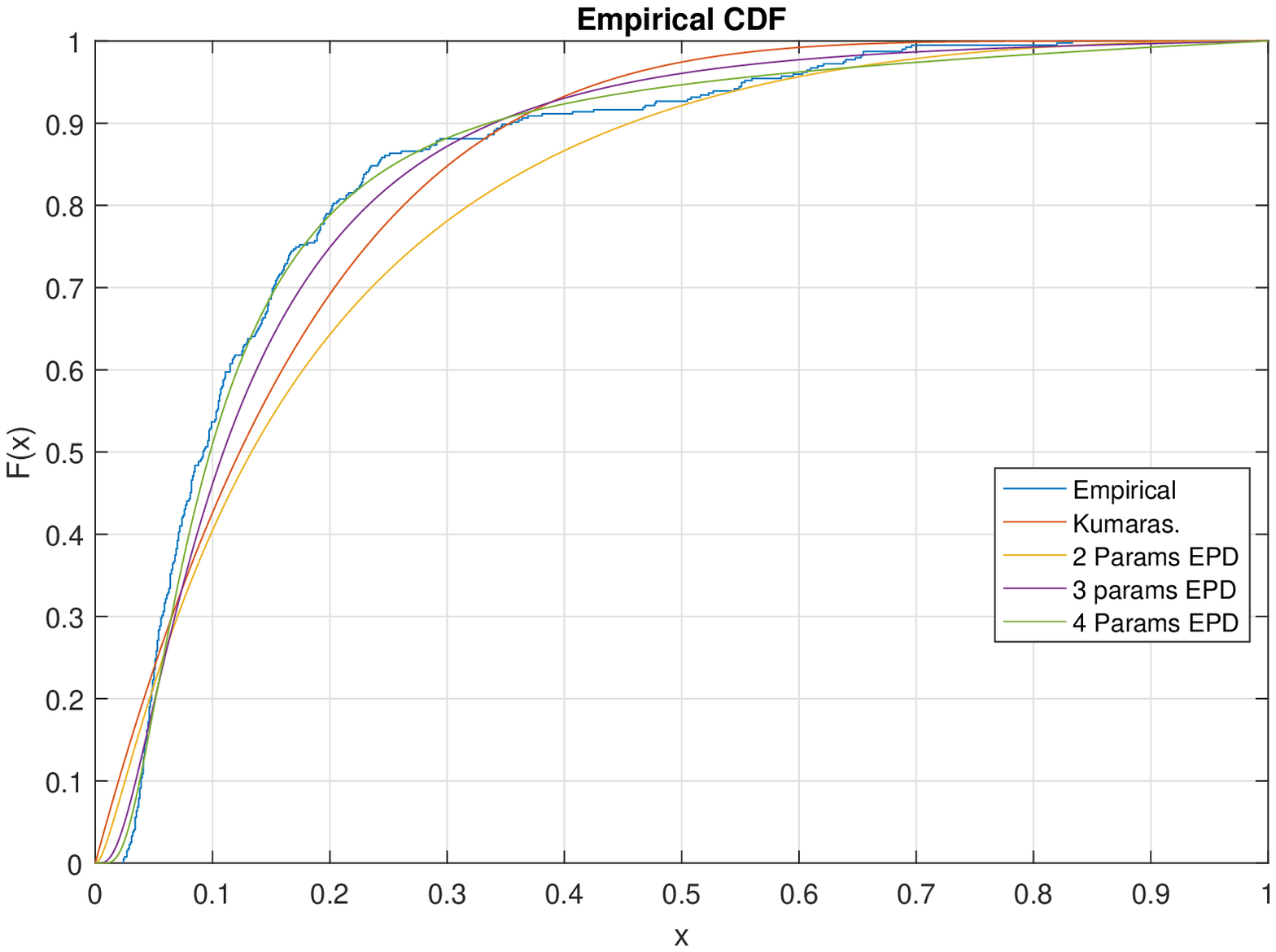}
    \caption{Fitting cdfs to the empirical cdf}
          \label{fig:fig10b}
  \end{subfigure}
\caption{Fitting proportion of ethnic minorities in different local authorities in England and Wales using the Kumaraswamy and extended power distributions}
      \label{fig:fig10}
  \end{figure}

\subsubsection{Example 6}
In this example, we simulate data ($n=1000$) from the three-parameter extended power distribution with parameters $\alpha_{0}=1$, $\alpha_{1}=0.001$ and $\alpha_{4}=4$. We then fit the simulated data using the Kumaraswamy distribution. From figure \ref{fig:fig7}, we see that the the Kumaraswamy distribution fails to properly fit the data as values of the random variable approaches 1.
\begin{figure}[ht]
\centering
    \begin{subfigure}[b]{0.5\linewidth}
    \centering
    \includegraphics[scale=0.4]{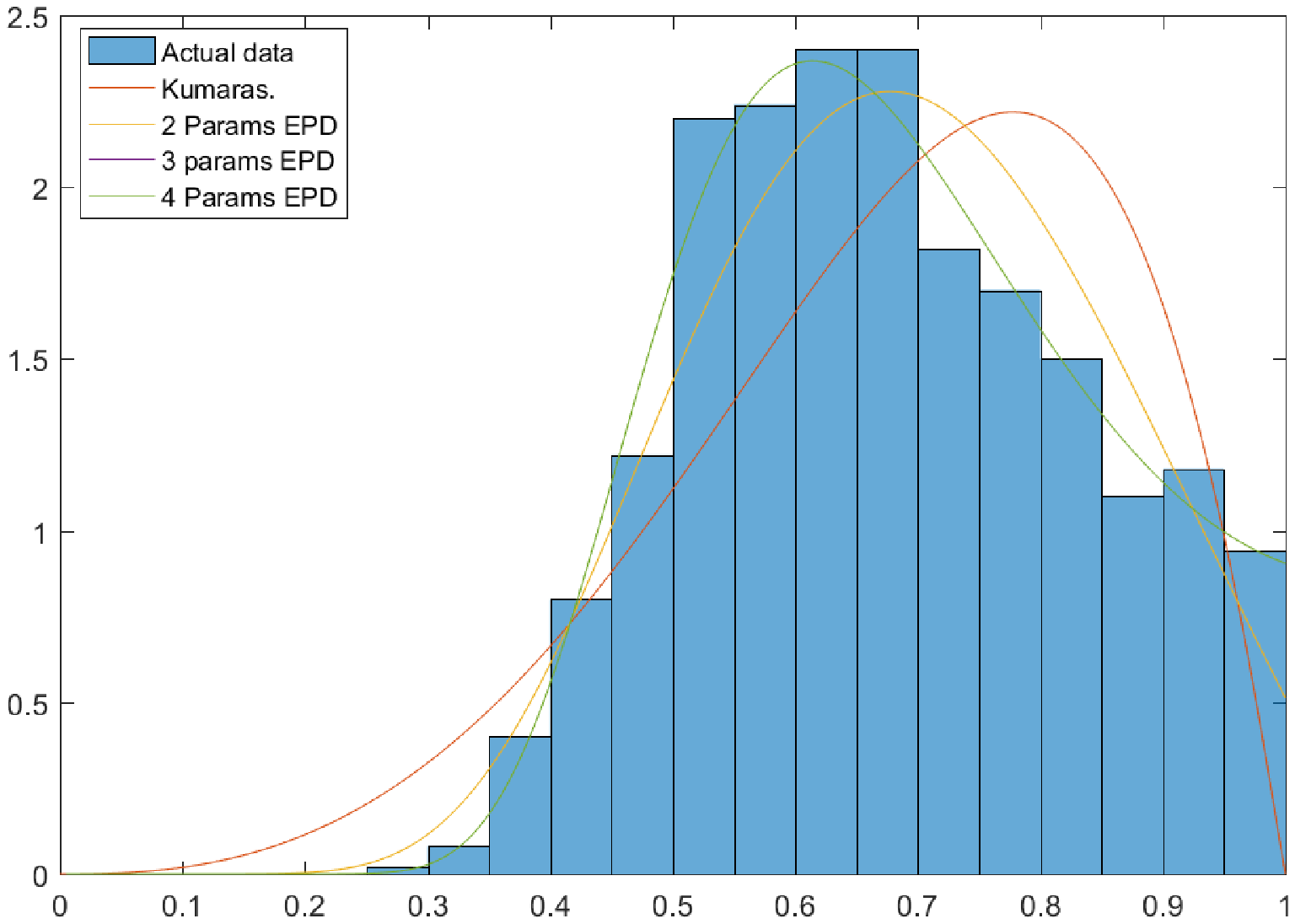}
    \caption{Density functions fitted to histogram}
        \label{fig:fig7a}
  \end{subfigure}
  \begin{subfigure}[b]{0.5\linewidth}
    \centering
    \includegraphics[scale=0.4]{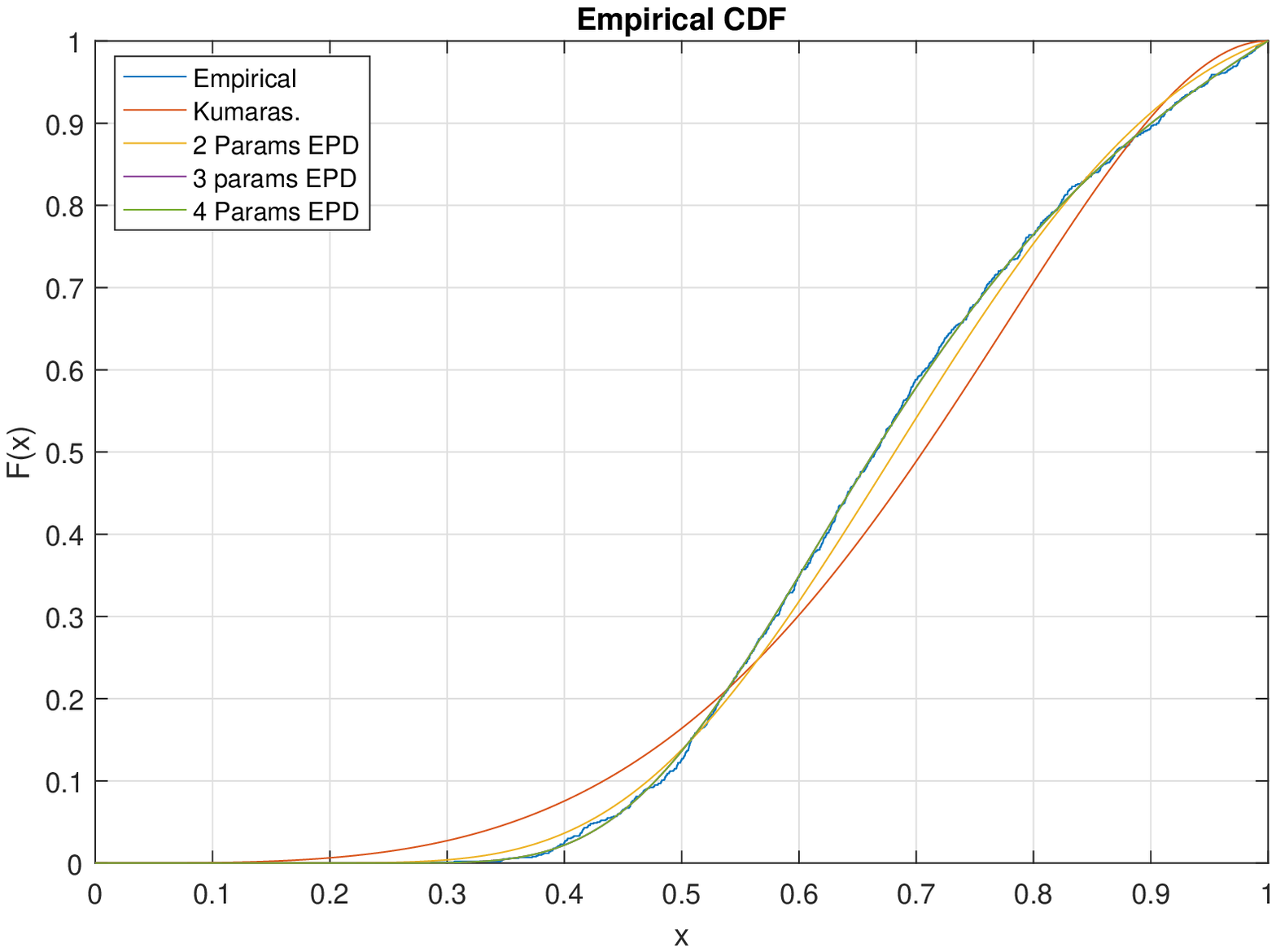}
    \caption{Fitting cdfs to the empirical cdf}
          \label{fig:fig7b}
  \end{subfigure}
\caption{Fitting a Kumaraswamy distribution and extended power distribution to data simulated from the three-parameter extended power distribution}
\label{fig:fig7}
\end{figure}

\subsubsection{Example 7}
This example, obtained from the UNICEF website (\cite{UNICEFeducation}) contains the proportion of literate youths in 149 different countries (updated in October, 2015). Figure \ref{fig:fig11} show a plot of the actual data as well as the maximum likelihood fits. In this example, the Kumaraswamy distribution is not applicable, because some countries have youth literacy of $100\%$ (that is proportion is 1), however, the Kumaraswamy distribution has density converging to 0 (for $\beta>1$) and to $\infty$ (for $\beta<1$) when $t$ is exactly 1, making the log-likelihood function undefined at $t=1$ and the MLE cannot be computed. For this example, the two-parameter extended power distribution has the smallest AIC values, hence gives the best fit.
\begin{figure}[ht]
\centering
    \begin{subfigure}[b]{0.5\linewidth}
    \centering
    \includegraphics[scale=0.4]{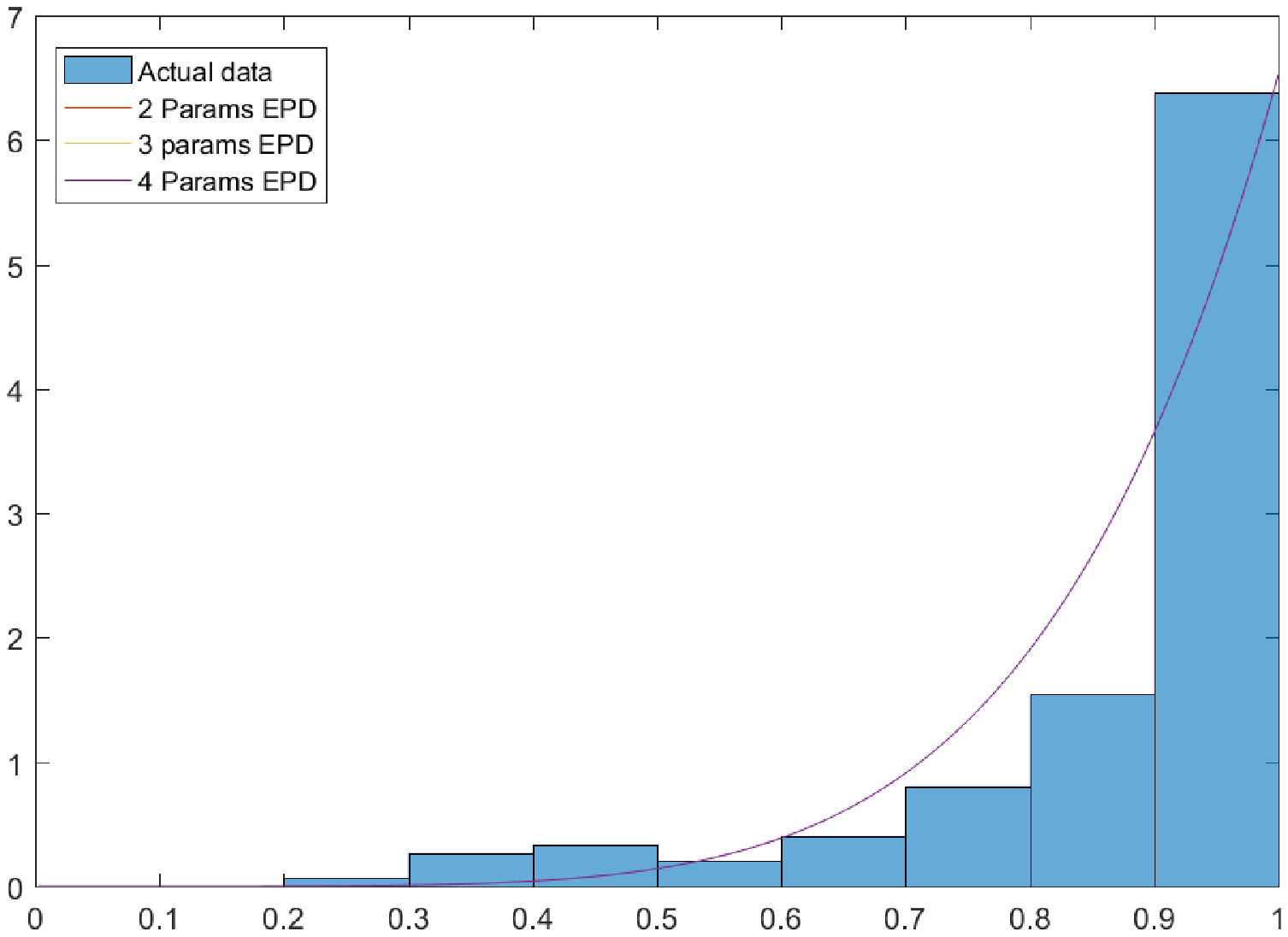}
    \caption{Density functions fitted to histogram}
        \label{fig:fig11a}
  \end{subfigure}
  \begin{subfigure}[b]{0.5\linewidth}
    \centering
    \includegraphics[scale=0.4]{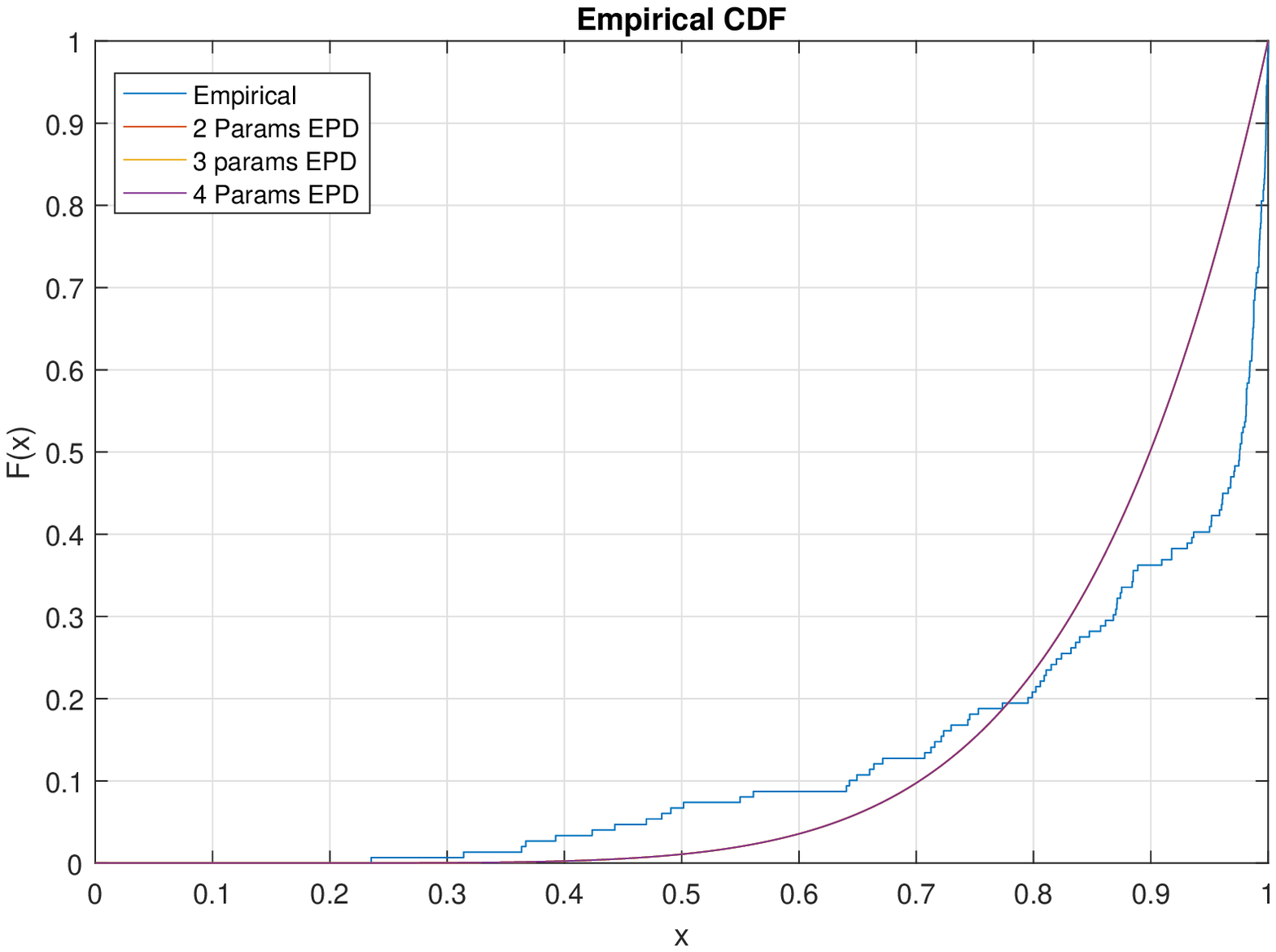}
    \caption{Fitting cdfs to the empirical cdf}
          \label{fig:fig11b}
  \end{subfigure}
\caption{Fitting youths literacy rates in 149 different countries using the Kumaraswamy and extended power distributions}
      \label{fig:fig11}
  \end{figure}
\section{Complementary Distribution}
The complementary distribution for the beta distribution has been defined in \cite{jones2002complementary} and are obtained by considering the quantile functions and as the cumulative distribution function of a probability distribution on $(0, 1)$. This same procedure can be considered for other bounded distribution on $(0, 1)$ to get a new bounded distribution. In the complementary beta distribution, the form of density function and moments involve the use of special function as given in \cite{jones2002complementary}. However, for the complementary Kumarawamy distribution (\cite{jones2009kumaraswamy}), even though the density function can be written in a nice form, there is nothing new to be seen in its form because it is equivalent to  $g(1-t, \frac{1}{\beta}, \frac{1}{\alpha})$, where $g\sim \text{Kumaraswamy}(t, \alpha, \beta)$.

In this section, we will define the complementary extended power distribution, its properties such as moments, quantiles and parameter estimation. The density function of the complementary extended power distribution is obtained by taking the first derivative of the quantile function. The quantile function is given as
\begin{equation*}
Q(t)=F^{-1}(t)=\exp\left\{\frac{-\alpha_{0}+(\alpha_{0}^{2}-4\alpha_{1}\log(t))^{1/2}}{-2\alpha_{1}}\right\},
\end{equation*}
hence the density function of the proposed complementary extended power distribution is
\begin{equation}\label{eqn:eqn16}
q(t)=\frac{1}{t}(\alpha_{0}^{2}-4\alpha_{1}\log t)^{-1/2}\exp\left\{\frac{-\alpha_{0}+(\alpha_{0}^{2}-4\alpha_{1}\log(t))^{1/2}}{-2\alpha_{1}}\right\}.
\end{equation}
An interesting properties of this distribution on first sight is that both its density function and distribution are available in a simple closed form without the need for any special function. Secondly, it has a form which is not exactly similar to the extended power distribution, implying some new information may be gained.

An interesting special case of the complementary extended power distribution is obtained by setting $\alpha_{1}=0$. Simply replacing $\alpha_{1}=0$ in equation \ref{eqn:eqn16}, will make the exponent indeterminate. To deal with this problem, we use binomial expansion on $(\alpha_{0}^{2}-4\alpha_{1}\log(t))^{1/2}$, hence the density function becomes
\begin{equation*}
q(t)=  \frac{1}{t}(\alpha_{0}^{2}-4\alpha_{1}\log t)^{-1/2}\exp\left\{ \frac{1}{\alpha_{0}}\log t +\frac{\alpha_{1}(\log t)^{2}}{\alpha_{0}^{3}}+\frac{2\alpha_{1}^{2}(\log t)^{3}}{\alpha_{0}^{5}}+ \ldots\right\}, \quad t \in (0, 1)
\end{equation*}
and setting $\alpha_{1}=0$ gives
\begin{equation*}
q(t)=\frac{1}{\alpha_{0}}t^{\frac{1}{\alpha_{0}}-1}
\end{equation*}
which corresponds to a $\text{Beta}(\frac{1}{\alpha_{0}}, 1)$ or a $\text{Kumaraswamy}(\frac{1}{\alpha_{0}}, 1)$. Like in the extended power distribution considered earlier, fixing $\alpha_{0}=1$ and $\alpha_{1}=0$, $q(t)$ reduces to the density for a uniform distribution on $(0, 1)$. If T is a random variable from the two-parameter complementary extended power distribution, with $\alpha_{1}=0$, then the random variable $V=-\log T$ follows an exponential distribution with density function
\begin{equation*}
  f(v)=\frac{1}{\alpha_{0}}\exp\{-\frac{v}{\alpha_{0}}\}.
\end{equation*}

Generating random variates from the complementary extended power distribution is straightforward using the probability integral transform. If $U\sim U(0, 1)$, then we simulate random variables T using,
\begin{equation}\label{eqn:eqn19}
T=\exp\left\{ \alpha_{0}\log U-\alpha_{1}(\log U)^{2}\right\}.
\end{equation}
\subsection{Moments, Quantiles and Mode}
In this section, we will give formulas for the moments, quantiles and mode of the complementary extended power distribution.
We will begin by giving a formula for the $kth$ moment of the complementary extended power distribution. This makes it easier to calculate quantities such as the variance, skewness and kurtosis.
\begin{thm}
The $kth$  moment of the complementary extended power distribution is
\begin{equation}\label{eqn:eqn18}
E(T^{k})=\frac{1}{2}\sqrt{\frac{\pi}{\alpha_{1}k}}\exp\left\{\frac{(4\alpha_{0}\alpha_{1}k+4\alpha_{1})^{2}}{64\alpha_{1}^{3}k}\right\}\text{erfc}\bigg(\frac{4\alpha_{0}\alpha_{1}k+4\alpha_{1}}{8\alpha_{1}^{3/2}k^{1/2}}\bigg).
\end{equation}
\end{thm}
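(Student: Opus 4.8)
The plan is to bypass the density in equation \ref{eqn:eqn16} entirely and work from the probability integral transform representation of equation \ref{eqn:eqn19}: if $U\sim U(0,1)$, then $T=\exp\{\alpha_{0}\log U-\alpha_{1}(\log U)^{2}\}$ has the complementary extended power distribution. Hence
\[
E(T^{k})=\int_{0}^{1}\exp\{k\alpha_{0}\log u-k\alpha_{1}(\log u)^{2}\}\,du ,
\]
which reduces the problem to a single elementary integral. (The same integral arises if one instead substitutes $u=G(t)$, with $G$ the complementary cdf, directly in $\int_{0}^{1}t^{k}q(t)\,dt$, since $q=G'$ and $G^{-1}=F$.)

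First I would substitute $w=-\log u$, which maps $(0,1)$ onto $(0,\infty)$ and gives
\[
E(T^{k})=\int_{0}^{\infty}\exp\{-k\alpha_{1}w^{2}-(k\alpha_{0}+1)w\}\,dw ,
\]
a convergent Gaussian-type integral provided $\alpha_{1}>0$ (the degenerate case $\alpha_{1}=0$ being the $\mathrm{Beta}(1/\alpha_{0},1)$ special case treated separately in the text). Next I would complete the square,
\[
k\alpha_{1}w^{2}+(k\alpha_{0}+1)w=k\alpha_{1}\left(w+\frac{k\alpha_{0}+1}{2k\alpha_{1}}\right)^{2}-\frac{(k\alpha_{0}+1)^{2}}{4k\alpha_{1}} ,
\]
pull the constant $\exp\{(k\alpha_{0}+1)^{2}/(4k\alpha_{1})\}$ outside the integral, shift to $s=w+\frac{k\alpha_{0}+1}{2k\alpha_{1}}$ and rescale by $\sqrt{k\alpha_{1}}$, so that what remains equals $\frac{\sqrt{\pi}}{2\sqrt{k\alpha_{1}}}\,\text{erfc}\big(\frac{k\alpha_{0}+1}{2\sqrt{k\alpha_{1}}}\big)$ by the definition $\text{erfc}(x)=\frac{2}{\sqrt{\pi}}\int_{x}^{\infty}e^{-u^{2}}\,du$ (equivalently $\text{erfc}(x)=2(1-\Phi(x\sqrt{2}))$, as used elsewhere in the paper).

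This yields $E(T^{k})=\frac{\sqrt{\pi}}{2\sqrt{k\alpha_{1}}}\exp\{(k\alpha_{0}+1)^{2}/(4k\alpha_{1})\}\,\text{erfc}\big((k\alpha_{0}+1)/(2\sqrt{k\alpha_{1}})\big)$, and the last step is purely cosmetic: multiplying numerator and denominator through by $4\alpha_{1}$ inside both the exponent and the erfc argument rewrites $(k\alpha_{0}+1)^{2}/(4k\alpha_{1})$ as $(4\alpha_{0}\alpha_{1}k+4\alpha_{1})^{2}/(64\alpha_{1}^{3}k)$ and $(k\alpha_{0}+1)/(2\sqrt{k\alpha_{1}})$ as $(4\alpha_{0}\alpha_{1}k+4\alpha_{1})/(8\alpha_{1}^{3/2}k^{1/2})$, reproducing equation \ref{eqn:eqn18}. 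I do not expect a genuine obstacle; the only care needed is tracking the constants through the completion of the square and the two changes of variable, and recording the standing assumption $\alpha_{1}>0$ that makes the integral converge.
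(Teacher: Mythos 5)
Your computation is correct and lands on the same identity, and the route is essentially the paper's: the paper substitutes $u=\frac{-\alpha_{0}+(\alpha_{0}^{2}-4\alpha_{1}\log t)^{1/2}}{-2\alpha_{1}}$ (the logarithm of the complementary cdf) in $\int_{0}^{1}t^{k}q(t)\,dt$, which produces exactly your Gaussian integral with $u=-w$, after which both arguments complete the square and identify the tail integral as $\tfrac{\sqrt{\pi}}{2\sqrt{k\alpha_{1}}}\,\mathrm{erfc}\bigl(\tfrac{k\alpha_{0}+1}{2\sqrt{k\alpha_{1}}}\bigr)$. Your entry point via $E(T^{k})=E\{F(U)^{k}\}$ is a slightly cleaner way to reach that integral, since it never requires writing down the density $q$, but the substantive step is identical.
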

\begin{proof}
\begin{equation*}
E(T^{k})=\int_{0}^{1}t^{k-1}(\alpha_{0}^{2}-4\alpha_{1}\log t)^{-1/2}\exp\left\{\frac{-\alpha_{0}+(\alpha_{0}^{2}-4\alpha_{1}\log(t))^{1/2}}{-2\alpha_{1}}\right\}dt.
\end{equation*}
If we define $u=\frac{-\alpha_{0}+(\alpha_{0}^{2}-4\alpha_{1}\log(t))^{1/2}}{-2\alpha_{1}}$, we have
\begin{equation*}
E(T^{k})=\exp\left\{ \alpha_{1}k\bigg(\ \frac{4\alpha_{0}\alpha_{1}k+4\alpha_{1}}{8\alpha_{1}^{2}k}\bigg)^{2}\right\}\int_{-\infty}^{0}\exp\left\{-\alpha_{1}k\bigg[u- \bigg(\ \frac{4\alpha_{0}\alpha_{1}k+4\alpha_{1}}{8\alpha_{1}^{2}k}\bigg)\bigg]^{2}\right\}du.
\end{equation*}
Simplifying further, we have the final result as
\begin{equation*}
E(T^{k})=\frac{1}{2}\sqrt{\frac{\pi}{\alpha_{1}k}}\exp\left\{\frac{(4\alpha_{0}\alpha_{1}k+4\alpha_{1})^{2}}{64\alpha_{1}^{3}k}\right\}\text{erfc}\bigg(\frac{4\alpha_{0}\alpha_{1}k+4\alpha_{1}}{8\alpha_{1}^{3/2}k^{1/2}}\bigg).
\end{equation*}
\end{proof}
With this result, we have $E(T)$ and $Var(T)$ as
\begin{align*}
  E(T)=&\frac{1}{2}\sqrt{\frac{\pi}{\alpha_{1}}}\exp\left\{\frac{(4\alpha_{0}\alpha_{1}+4\alpha_{1})^{2}}{64\alpha_{1}^{3}}\right\}\text{erfc}\bigg(\frac{4\alpha_{0}\alpha_{1}+4\alpha_{1}}{8\alpha_{1}^{3/2}}\bigg)\\
  Var(T)=&\frac{1}{2}\sqrt{\frac{\pi}{2\alpha_{1}}}\exp\left\{\frac{(8\alpha_{0}\alpha_{1}+4\alpha_{1})^{2}}{128\alpha_{1}^{3}}\right\}\text{erfc}\bigg(\frac{8\alpha_{0}\alpha_{1}+4\alpha_{1}}{(2^{7/3}\alpha_{1})^{3/2}}\bigg)\\
  &-\frac{1}{4}\frac{\pi}{\alpha_{1}}\exp\left\{\frac{(4\alpha_{0}\alpha_{1}+4\alpha_{1})^{2}}{32\alpha_{1}^{3}}\right\}\left\{\text{erfc}\bigg(\frac{4\alpha_{0}\alpha_{1}+4\alpha_{1}}{8\alpha_{1}^{3/2}}\bigg)\right\}^{2}.\\
\end{align*}

The median for the complementary extended power distribution is
\begin{equation*}
Q_{0.5}=\exp\left\{ \alpha_{0}\log(0.5)-\alpha_{1}(\log(0.5))^{2}\right\}.
\end{equation*}
and the mode is obtainable in a closed form by calculating the first derivative of $q(t)$ and equating to zero. Hence, we can calculate the mode by solving the cubic equation
\begin{equation}\label{eqn:eqn20}
A_{0}+A_{1}\log t+A_{2}(\log t)^{2}+A_{3}(\log t)^3=0
\end{equation}
where
\begin{eqnarray*}
  A_{0} &=& \alpha_{0}^{2}-4\alpha_{0}^{4}\alpha_{1}+4\alpha_{0}^{6}\alpha_{1}^{2}-1 \\
  A_{1} &=& -48\alpha_{0}^{4}\alpha_{1}^{3}-4\alpha_{1} \\
  A_{2} &=& 192\alpha_{0}^{2}\alpha_{1}^{4}-64\alpha_{1}^{3} \\
  A_{3} &=& -256\alpha_{1}^{5}.
\end{eqnarray*}
\subsection{Maximum Likelihood Estimation}
The log-likelihood function of the complementary extended power distribution is
\begin{equation}\label{eqn:eqn21}
\ell(\alpha_{0}, \alpha_{1})=\sum_{i=1}^{n}\log (\alpha_{0}^{2}-4\alpha_{1}\log t_{i})^{-1/2}-\sum_{i=1}^{n}\log t_{i}+\frac{\alpha_{0}n}{2\alpha_{1}}-\frac{1}{2\alpha_{1}}\sum_{i=1}^{n}(\alpha_{0}^{2}-4\alpha_{1}\log t_{i})^{1/2}
\end{equation}
and we have the following system of equations by differential the log-likelihood
\begin{eqnarray*}
  \frac{\partial \ell(\alpha_{0}, \alpha_{1})}{\partial \alpha_{0}} &=& -\alpha_{0}\sum_{i=1}^{n}(\alpha_{0}^{2}-4\alpha_{1}\log t_{i})^{-1}+\frac{n}{2\alpha_{1}}-\frac{\alpha_{0}}{2\alpha_{1}}\sum_{i=1}^{n}(\alpha_{0}^{2}-4\alpha_{1}\log t_{i})^{-1/2}=0 \\
\frac{\partial \ell(\alpha_{0}, \alpha_{1})}{\partial \alpha_{0}}   &=& 2\sum_{i=1}^{n}\log t_{i}(\alpha_{0}^{2}-4\alpha_{1}\log t_{i})^{-1}-\frac{\alpha_{0}n}{2\alpha_{1}^{2}}-\frac{\alpha_{1}^{-3}}{2}\sum_{i=1}^{n}(\alpha_{0}^{2}-4\alpha_{1}\log t_{i})^{-1/2}=0.
\end{eqnarray*}
Like in the extended power function, we can apply non-linear optimisation to estimate the unknown parameters.
\section{Conclusion}

A bounded probability distribution motivated by warping functions in functional data analysis is proposed. We have explored properties of the extended power distribution (EPD), such as moments and applications. We have given special cases of the distribution which are related to the Raleigh distribution, exponential distribution, beta distribution and linear hazard rate distribution (\cite{bain1974analysis}).
In this work, closed forms for the mode, median and other quantiles were given. An important property of this distribution over the beta distribution is that we have its cumulative distribution function and quantile function available in simple mathematical forms which makes it easy to simulate using the probability integral transform.
\begin{figure}[ht]
      \centering
    \includegraphics[scale=0.8]{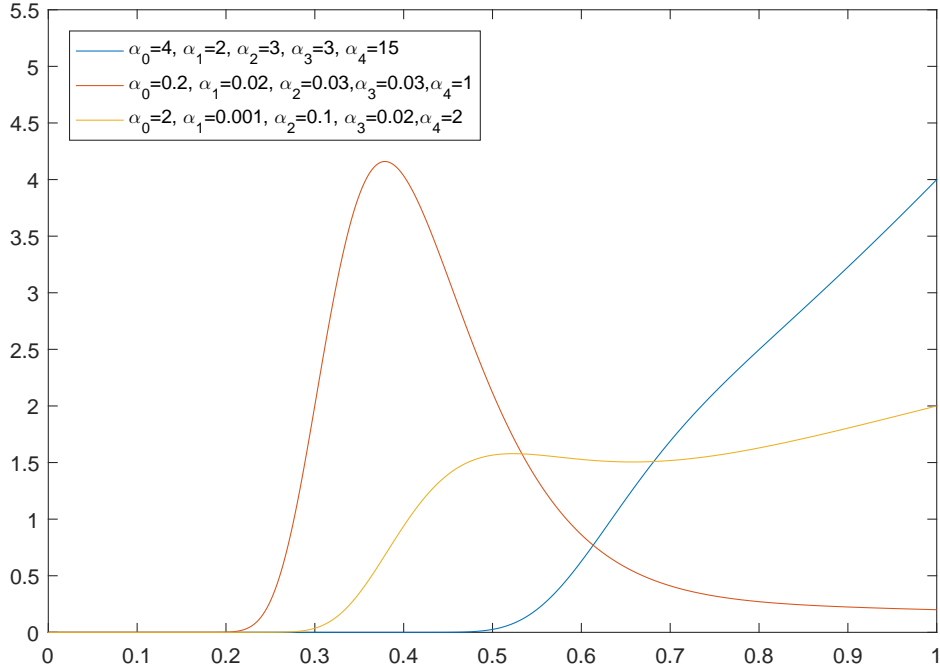}
    \caption{Plots of the 5-parameter extended power distribution for different parameter values showing some extra flexibilities}
    \label{fig:fig8}
  \end{figure}

We note that the Kumaraswamy distribution and extended power distribution have some interesting properties in common, such as a closed form for their cumulative distribution and quantile functions. However, unlike Kumaraswamy distribution, the extended power distribution is easily extendable from a two-parameter to a muilti-parameter distribution. This multi-parameter extension comes with added flexibility as we have seen in some applications in this work. In figure \ref{fig:fig8} for example, we show a five-parameter case of the extended power distribution with different parameter choices. Another advantage of the extended power distribution is that as $t$ nears $1$, the density approaches the parameter $\alpha_{0}$, while the density function of the Kumaraswamy distribution (and the beta distribution) approaches $0$ or $\infty$, as $t$ approaches $1$. This is useful in applications where there is a high proportion of observed values closer to the upper bound.

The generalisation of the extended power distribution makes computing moments more complicated and would involve numerical integration. However, simulations using the probability integral transform are less complicated and would involve finding the roots of a polynomial (same applies to the median). We can also estimate the parameters of the generalised extended power distribution numerically, using maximum likelihood.

Like in \cite{jones2002complementary}, we have proposed a complementary extended power distribution, which is the distribution derived from the quantile of the extended power distribution. We have shown that this distribution is linked to the beta distribution and the exponential distribution with inverted parameters. Simulation from this distribution, are easy using the probability integral transform because of the closed form of the cumulative density of the extended power distribution. However, the mode of the complementary extended power distribution is not available in closed form, but can the calculated as the solution to a cubic equation.
\begin{figure}[ht]
      \centering
    \includegraphics[scale=0.8]{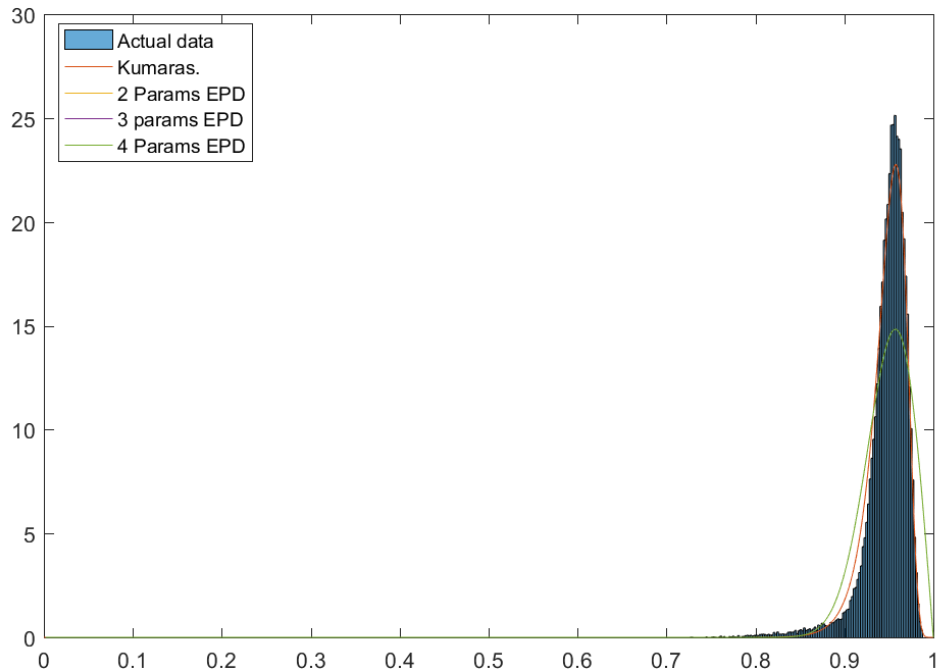}
    \caption{Fitting distributions to skewed data with high peak}
    \label{fig:fig12}
  \end{figure}

We note that for skewed data (either left or right skewed) with very high peaks, the Kumaraswamy distribution gives a better fit than both the beta distribution and extended power distribution. This is seen in figure \ref{fig:fig12}, for data containing information on employment rates in different counties in the US (\cite{BLSemploy}). The Kumaraswamy distribution has a higher peak than the two, three and four-parameter extended power distribution.

In the future, more applications peculiar to the extended power distribution needs to be explored. It will also be interesting to consider a new family of distributions by combining the extended power distribution and its complementary distribution. Just like the beta regression models are used for modelling bounded responses, we believe the extended power distribution can be alternative to the beta distribution. Programs for calculating different quantities related to extended power distribution are available in MATLAB.
\bibliographystyle{plainnat}
\bibliography{JSC}

\begin{thebibliography}{28}
\providecommand{\natexlab}[1]{#1}
\providecommand{\url}[1]{\texttt{#1}}
\expandafter\ifx\csname urlstyle\endcsname\relax
  \providecommand{\doi}[1]{doi: #1}\else
  \providecommand{\doi}{doi: \begingroup \urlstyle{rm}\Url}\fi

\bibitem[Akinsete et~al.(2008)Akinsete, Famoye, and Lee]{akinsete2008beta}
Alfred Akinsete, Felix Famoye, and Carl Lee.
\newblock The beta-pareto distribution.
\newblock \emph{Statistics}, 42\penalty0 (6):\penalty0 547--563, 2008.

\bibitem[Bain(1974)]{bain1974analysis}
Lee~J Bain.
\newblock Analysis for the linear failure-rate life-testing distribution.
\newblock \emph{Technometrics}, 16\penalty0 (4):\penalty0 551--559, 1974.

\bibitem[Barreto-Souza et~al.(2010)Barreto-Souza, Santos, and
  Cordeiro]{barreto2010beta}
Wagner Barreto-Souza, Alessandro~HS Santos, and Gauss~M Cordeiro.
\newblock The beta generalized exponential distribution.
\newblock \emph{Journal of Statistical Computation and Simulation}, 80\penalty0
  (2):\penalty0 159--172, 2010.

\bibitem[BLS(2017)]{BLSemploy}
BLS.
\newblock Local area unemployment statistics.
\newblock \url{https://www.bls.gov/web/metro/laucntycur14.txt}, 2017.
\newblock Accessed: 2017-09-30.

\bibitem[Brookings(2017)]{Brookings2017}
Brookings.
\newblock Vital statistics on congress.
\newblock
  \url{https://www.brookings.edu/multi-chapter-report/vital-statistics-on-congress/},
  2017.
\newblock Accessed: 2017-09-15.

\bibitem[Cordeiro and de~Castro(2011)]{cordeiro2011new}
Gauss~M Cordeiro and Mario de~Castro.
\newblock A new family of generalized distributions.
\newblock \emph{Journal of statistical computation and simulation}, 81\penalty0
  (7):\penalty0 883--898, 2011.

\bibitem[Cordeiro et~al.(2010)Cordeiro, Ortega, and
  Nadarajah]{cordeiro2010kumaraswamy}
Gauss~M Cordeiro, Edwin~MM Ortega, and Saralees Nadarajah.
\newblock The kumaraswamy weibull distribution with application to failure
  data.
\newblock \emph{Journal of the Franklin Institute}, 347\penalty0 (8):\penalty0
  1399--1429, 2010.

\bibitem[Cordeiro et~al.(2012)Cordeiro, Nadarajah, and
  Ortega]{cordeiro2012kumaraswamy}
Gauss~M Cordeiro, Saralees Nadarajah, and Edwin~MM Ortega.
\newblock The kumaraswamy gumbel distribution.
\newblock \emph{Statistical Methods \& Applications}, 21\penalty0 (2):\penalty0
  139--168, 2012.

\bibitem[de~Pascoa et~al.(2011)de~Pascoa, Ortega, and
  Cordeiro]{de2011kumaraswamy}
Marcelino~AR de~Pascoa, Edwin~MM Ortega, and Gauss~M Cordeiro.
\newblock The kumaraswamy generalized gamma distribution with application in
  survival analysis.
\newblock \emph{Statistical methodology}, 8\penalty0 (5):\penalty0 411--433,
  2011.

\bibitem[Eugene et~al.(2002)Eugene, Lee, and Famoye]{eugene2002beta}
Nicholas Eugene, Carl Lee, and Felix Famoye.
\newblock Beta-normal distribution and its applications.
\newblock \emph{Communications in Statistics-Theory and methods}, 31\penalty0
  (4):\penalty0 497--512, 2002.

\bibitem[Ferrari and Cribari-Neto(2004)]{ferrari2004beta}
Silvia Ferrari and Francisco Cribari-Neto.
\newblock Beta regression for modelling rates and proportions.
\newblock \emph{Journal of Applied Statistics}, 31\penalty0 (7):\penalty0
  799--815, 2004.

\bibitem[G{\'o}mez-D{\'e}niz et~al.(2014)G{\'o}mez-D{\'e}niz, Sordo, and
  Calder{\'\i}n-Ojeda]{gomez2014log}
Emilio G{\'o}mez-D{\'e}niz, Miguel~A Sordo, and Enrique Calder{\'\i}n-Ojeda.
\newblock The log--lindley distribution as an alternative to the beta
  regression model with applications in insurance.
\newblock \emph{Insurance: Mathematics and Economics}, 54:\penalty0 49--57,
  2014.

\bibitem[Gordy et~al.(1998)]{gordy1998generalization}
Michael~B Gordy et~al.
\newblock \emph{A generalization of generalized beta distributions}.
\newblock Division of Research and Statistics, Division of Monetary Affairs,
  Federal Reserve Board, 1998.

\bibitem[Jafari and Mahmoudi(2012)]{jafari2012beta}
Ali~Akbar Jafari and Eisa Mahmoudi.
\newblock Beta-linear failure rate distribution and its applications.
\newblock \emph{arXiv preprint arXiv:1212.5615}, 2012.

\bibitem[Jodr{\'a} and Jim{\'e}nez-Gamero(2016)]{jodra2016note}
P~Jodr{\'a} and MD~Jim{\'e}nez-Gamero.
\newblock A note on the log-lindley distribution.
\newblock \emph{Insurance: Mathematics and Economics}, 71:\penalty0 189--194,
  2016.

\bibitem[Jones(2002)]{jones2002complementary}
MC~Jones.
\newblock The complementary beta distribution.
\newblock \emph{Journal of Statistical Planning and Inference}, 104\penalty0
  (2):\penalty0 329--337, 2002.

\bibitem[Jones(2004)]{jones2004families}
MC~Jones.
\newblock Families of distributions arising from distributions of order
  statistics.
\newblock \emph{Test}, 13\penalty0 (1):\penalty0 1--43, 2004.

\bibitem[Jones(2009)]{jones2009kumaraswamy}
MC~Jones.
\newblock Kumaraswamy’s distribution: A beta-type distribution with some
  tractability advantages.
\newblock \emph{Statistical Methodology}, 6\penalty0 (1):\penalty0 70--81,
  2009.

\bibitem[Kumaraswamy(1980)]{kumaraswamy1980generalized}
Ponnambalam Kumaraswamy.
\newblock A generalized probability density function for double-bounded random
  processes.
\newblock \emph{Journal of Hydrology}, 46\penalty0 (1-2):\penalty0 79--88,
  1980.

\bibitem[McDonald and Xu(1995)]{mcdonald1995generalization}
James~B McDonald and Yexiao~J Xu.
\newblock A generalization of the beta distribution with applications.
\newblock \emph{Journal of Econometrics}, 66\penalty0 (1):\penalty0 133--152,
  1995.

\bibitem[Nadarajah and Kotz(2003)]{nadarajah2003generalized}
Saralees Nadarajah and Samuel Kotz.
\newblock A generalized beta distribution ii.
\newblock \emph{InterStat}, 2003.

\bibitem[Nadarajah and Kotz(2004)]{nadarajah2004beta}
Saralees Nadarajah and Samuel Kotz.
\newblock The beta gumbel distribution.
\newblock \emph{Mathematical Problems in Engineering}, 2004\penalty0
  (4):\penalty0 323--332, 2004.

\bibitem[Nadarajah and Kotz(2006)]{nadarajah2006beta}
Saralees Nadarajah and Samuel Kotz.
\newblock The beta exponential distribution.
\newblock \emph{Reliability engineering \& system safety}, 91\penalty0
  (6):\penalty0 689--697, 2006.

\bibitem[ONS(2011)]{ONSrace}
ONS.
\newblock People, population and community.
\newblock
  \url{https://www.ons.gov.uk/peoplepopulationandcommunity/culturalidentity/ethnicity/articles/ethnicityandnationalidentityinenglandandwales/2012-12-11},
  2011.
\newblock Accessed: 2017-09-20.

\bibitem[Sarhan and Kundu(2009)]{sarhan2009generalized}
Ammar~M Sarhan and Debasis Kundu.
\newblock Generalized linear failure rate distribution.
\newblock \emph{Communications in Statistics—Theory and Methods}, 38\penalty0
  (5):\penalty0 642--660, 2009.

\bibitem[Sen(2006)]{sen2006linear}
Ananda Sen.
\newblock Linear hazard rate distribution.
\newblock \emph{Encyclopedia of Statistical Sciences}, 2006.

\bibitem[Sen and Bhattacharyya(1995)]{sen1995inference}
Ananda Sen and Gouri~K Bhattacharyya.
\newblock Inference procedures for the linear failure rate model.
\newblock \emph{Journal of Statistical Planning and Inference}, 46\penalty0
  (1):\penalty0 59--76, 1995.

\bibitem[{UNICEF}(2015)]{UNICEFeducation}
{UNICEF}.
\newblock Education and literacy.
\newblock \url{https://data.unicef.org/topic/education/overview/}, 2015.
\newblock Accessed: 2017-09-30.

\end{thebibliography}
\end{document}